\newtheorem{definition}{Definition}
\newtheorem{observation}{Observation}
\newtheorem{lemma}{lemma}
\newtheorem{theorem}{Theorem}
\author{Susobhan Bandopadhyay\affiliationmark{1}
  \and Sasthi C. Ghosh\affiliationmark{2}
  \and Subhasis Koley\affiliationmark{2}}
\title[Improved Bounds on the Span of $L(1,2)$-edge Labeling of Some Infinite Regular Grids]{Improved Bounds on the Span of $L(1,2)$-edge Labeling of Some Infinite Regular Grids\footnote{Preliminary version of this paper was presented at the 18th Cologne-Twente Workshop on Graphs and Combinatorial Optimization (CTW 2020), September 14-16, 2020~\cite{ours} }}
\affiliation{
  % one line per affiliation, no postal codes, grant numbers or similar
  School of Computer Sciences, National Institute of Science Education and Research,  Bhubaneswar, India\\
  Advanced Computing and Microelectronics Unit, Indian Statistical Institute, Kolkata, India}
\keywords{Channel assignment problem, $L(1,2)$-labeling, infinite grids, lower bound, upper bound.}
\begin{document}
%\publicationdetails{VOL}{2015}{ISS}{NUM}{SUBM}
\maketitle
\begin{abstract}
  For two given nonnegative integers $h$ and $k$, an $L(h,k)$-edge labeling of a graph $G$ is the assignment of labels $\{0,1, \cdots, n\}$ to the edges so that two edges having a common vertex are labeled with difference at least $h$ and two edges not having any common vertex but having a common edge connecting them are labeled with difference at least $k$. The span $\lambda'_{h,k}{(G)}$ is the minimum $n$ such that $G$ admits an $L(h,k)$-edge labeling. Here our main focus is on finding $\lambda'_{1,2}{(G)}$ for $L(1,2)$-edge labeling of infinite regular hexagonal ($T_3$), square ($T_4$), triangular ($T_6$) and octagonal ($T_8$) grids. It was known that $7 \leq \lambda'_{1,2}{(T_3)} \leq 8$, $10 \leq \lambda'_{1,2}{(T_4)} \leq 11$, $16 \leq \lambda'_{1,2}{(T_6)} \leq 20$ and $25 \leq \lambda'_{1,2}{(T_8)} \leq 28$. Here we settle two long standing open questions i.e. $\lambda'_{1,2}{(T_3)}$ and $\lambda'_{1,2}{(T_4)}$. We show $\lambda'_{1,2}{(T_3)} =7$, $\lambda'_{1,2}{(T_4)}= 11$. We also improve the bound for $T_6$ and $T_8$ and prove  $\lambda'_{1,2}{(T_6)} \geq 18$, $ \lambda'_{1,2}{(T_8)} \geq 26$. 

\end{abstract}

%DMTCS is an open access scientific is implemented by the
%\emph{episcience} platform, see \cite{berthaud:hal-01002815} for an
%overview of the strategy. It combines high scientific and editorial
%quality with an open access policy. It is priceless, neither authors
%nor readers pay money for the access. Access is granted by giving
%episcience an irrevocable license to publish the articles, the
%copyright remains with the authors. The platform itself is run by
%French government services that do their best to warrant continuous
%access and a high quality of service.

%This document describes the use of the %\texttt{dmtcs-episcience.cls}
%document class. It should be used
%\begin{center}
 % \emph{\textbf{for all DMTCS publications}}.
%\end{center}

%If you are still preparing a document for our previous \texttt{OJS}
%platform, please add \texttt{ojs} to the classes options, see
%Section~\ref{sec:options}.

\section{Introduction}\label{sec:1}

\textit{Channel assignment problem} (CAP) is one of the fundamental problems in wireless communication where frequency channels are assigned to transmitters such that interference can not occur. The objective of the CAP is to minimize the span of frequency spectrum. In 1980, Hale ~\cite{Hale} first formulated the CAP as a classical vertex coloring problem. Later on, in 1988 Roberts ~\cite{Roberts} introduced $L(h,k)$-vertex labeling as defined below:
\begin{definition}
For two non-negative integers $h$ and $k$, an $L(h,k)$-vertex labeling of a graph $G(V,E)$ is a function $\mathbf{f}:V \xrightarrow{}\{0,1,\cdots, n\}, \forall v \in V $ such that $\vert \mathbf{f}(u)-\mathbf{f}(v) \vert \geq h$ when $d(u,v)=1$ and $\vert \mathbf{f}(u)-\mathbf{f}(v) \vert \geq k$ when $d(u,v)=2$. Here, distance between vertices $u$ and $v$, $d(u,v)$ is $k'$ if at least $k'$ edges are required to connect $u$ and $v$.
\end{definition}
The \textit{span} $\lambda_{h,k}(G)$ of $L(h,k)$-vertex labeling is the minimum $n$ such that $G$ admits an $L(h,k)$-vertex labeling. In 1992 Griggs and Yeh ~\cite{GriggsYeh} extended the concept by introducing $L(k_1,k_2,\cdots, k_l)$-vertex labeling with separation $\{k_1,k_2,\cdots, k_l\}$ for $\{1,2,\cdots, l\}$ distant vertices and their main focus was on $L(h,k)$-vertex labeling for a special case $h=2,\ k=1$.  In 2007, Griggs and Jin ~\cite{GriggsJin} studied $L(h,k)$-edge labeling, which can be formally defined as: 
\begin{definition}
For two non-negative integers $h$ and $k$, an $L(h,k)$-edge labeling of a graph $G(V,E)$ is a function $\mathbf{f}:E \xrightarrow{}\{0,1,\cdots, n\}, \forall e \in E$ such that $\vert \mathbf{f}(e_1)-\mathbf{f}(e_2) \vert \geq h$ when $d(e_1,e_2)=1$ and $\vert \mathbf{f}(e_1)-\mathbf{f}(e_2) \vert \geq k$ when $d(e_1,e_2)=2$. Here, for any two edges $e_1$ and $e_2$, the distance $d(e_1,e_1)$ is $k'$ if at least $(k'-1)$ edges are required to connect $e_1$ and $e_2$.
\end{definition}
Like $L(h,k)$-vertex labeling, the \textit{span} $\lambda'_{h,k}(G)$ of $L(h,k)$-edge labeling is the minimum $n$ such that $G$ admits an $L(h,k)$-edge labeling.  In 2011, Calamoneri did a rigorous survey ~\cite{cala2} on both vertex and edge labeling problems. Authors in \cite{lin1,lin2,lin3,cala} have studied $L(h,k)$-edge labeling of regular infinite hexagonal ($T_3$), square ($T_4$), triangular ($T_6$) and octagonal ($T_8$) grids for the special case of $h=1$ and $k=2$. They obtained some upper and lower bounds on $\lambda'_{1,2}(G)$  for $T_3$, $T_4$, $T_6$ and $T_8$ with a gap between them. In this paper, we improved some of these gaps.

Given a graph $G(V,E)$, its \textit{line graph} $L(G)(V',E')$ is a graph such that each vertex of $L(G)$ represents an edge of $G$ and two vertices of $L(G)$ have an edge if and only if their corresponding edges share a common vertex in $G$. It is known that if $G$ is $d$-regular then $L(G)$ is $2(d-1)$-regular. Fig.~\ref{line_graph} shows $T_3$, $L(T_3)$, $L(T_4)$, $T_6$ and $T_8$. Note that
edge labeling of $G$ is equivalent to vertex labeling of $L(G)$. In our approach, instead of $L(1,2)$-edge labeling of
$T_3$ and $T_4$, we use $L(1,2)$-vertex labeling of $L(T_3)$ and $L(T_4)$. Note that, $L(T_6)$ and $L(T_8)$ are $10$-regular and $14$-regular, respectively. Because of this high degree, we consider $L(1,2)$-edge labeling of $T_6$ and $T_8$ directly. Our results on $\lambda'_{1,2}(G)$ for $T_3$, $T_4$, $T_6$, $T_8$ are stated in Table~\ref{tab1}. In this table, $a-b$ represents that $a \leq \lambda'_{1,2}(G) \leq b$. Here, we use coloring and labeling interchangeably.

\begin{figure}[ht]
\centerline{\includegraphics[width=10.0cm]{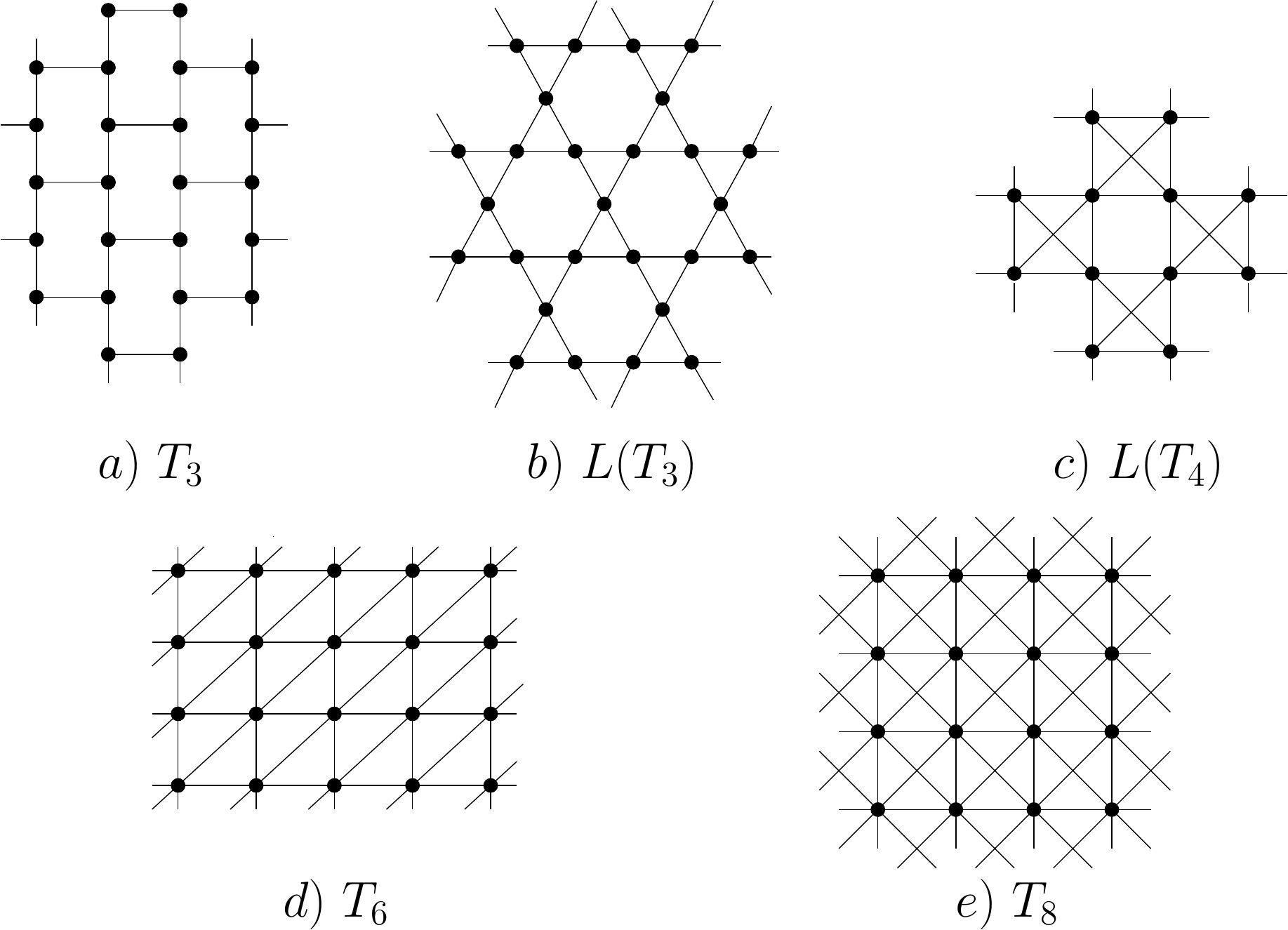}}
\caption{$T_3$, $L(T_3)$, $L(T_4)$, $T_6$ and $T_8$.}
\label{line_graph}
\end{figure}

\begin{table}[!ht]

\centering

    \caption{The main results.}     
    \label{tab1}

    \begin{small}
    \begin{tabular}{|c|c|c|c|c|c|c|c|c|}
    \hline
    \multirow{5}{*}{} &
      \multicolumn {4}{c|}{$\lambda'_{1,2}(G)$}  \\
    \hline
    \textbf{Grid} & $T_3$ & $T_4$ & $T_6$ & $T_8$ \\
    \hline
     \textbf{Known}     & 7-8 ~\cite{lin3}  & 10-11 ~\cite{lin3} & 16-20 ~\cite{cala} & 25-28 ~\cite{cala} \\
    \hline
     \textbf{Ours}   & 7    & 11  & 18-20 & 26-28\\
    \hline
    
    \end{tabular}
    \end{small} 
\end{table}

\section{Results}\label{sec:2}
 
\begin{figure}[!ht]
\begin{center}
\includegraphics[width=13cm]{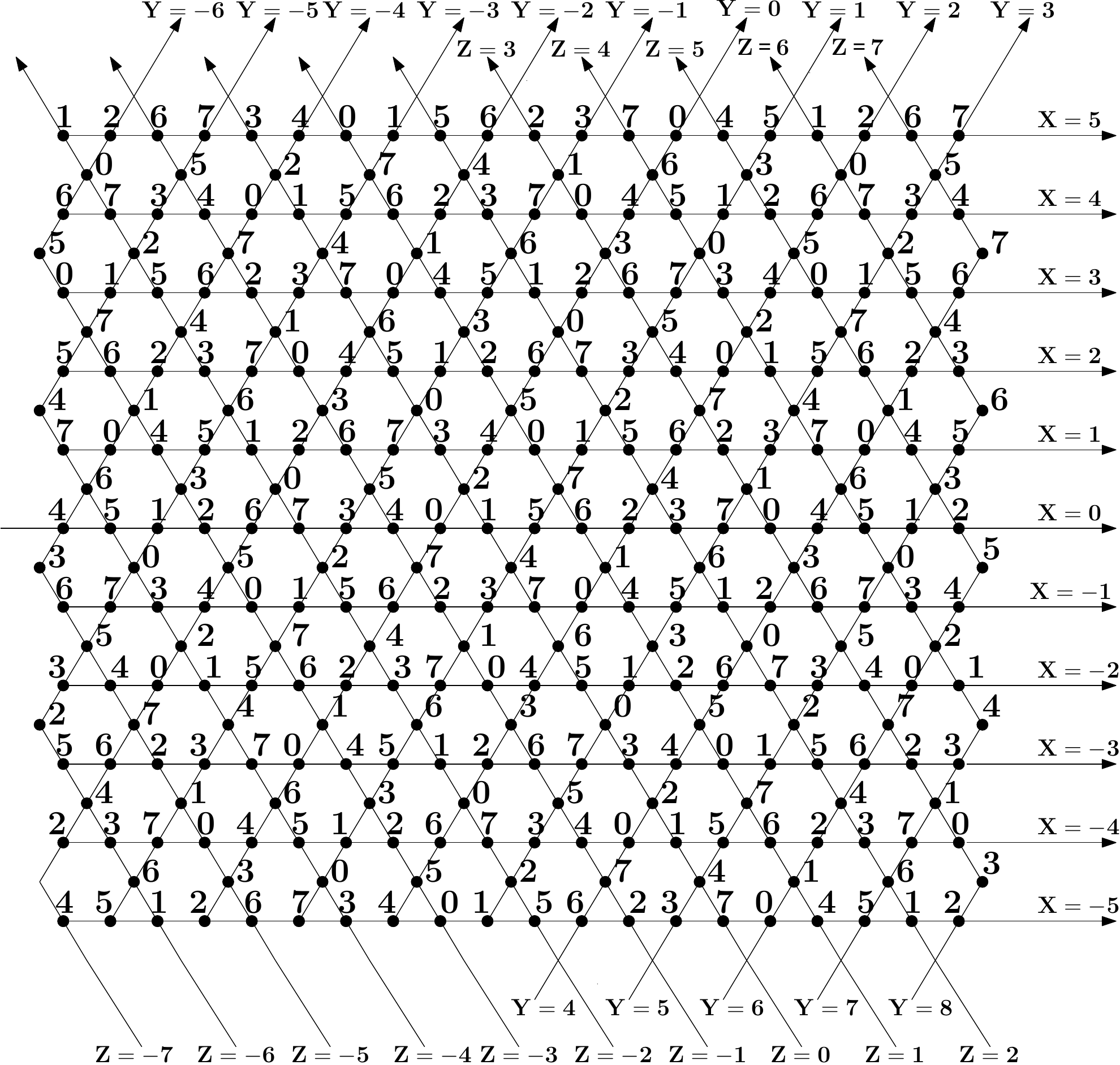}
\caption{Sub graph $G_{S}$ of $L(T_3)$}
\label{square2}
\end{center}
\end{figure}

\subsection{Hexagonal grid}

Consider $L(T_3)$ and the three co-ordinate axes $\mathbf{X}, \mathbf{Y}$ and $\mathbf{Z}$ as shown in Figure~\ref{square2}. Each vertex is an intersection of two of the three axes. The vertices of $L(T_3)$ can be partitioned into three disjoint sets $U_{xy}, V_{yz}$ and $W_{zx}$ as defined bellow:\\
$U_{xy}=\{u_{xy}: u_{xy}\ is\ an\ intersection\ of\ \mathbf{X=x}\ and\ \mathbf{Y=y}\}, \\
V_{yz}=\{v_{yz}: v_{yz}\ is\ an\ intersection\ of\ \mathbf{Y=y}\ and\ \mathbf{Z=z}\}, \\
W_{zx}=\{w_{zx}: w_{zx}\ is\ an\ intersection\ of\ \mathbf{Z=z}\ and\ \mathbf{X=x}\}. \\$

\begin{theorem}\label{vth2}\footnote{Proof of this theorem is  modified from that of the conference version~\cite{ours}.}
$\lambda'_{1,2}(T_3)= 7$.

\end{theorem}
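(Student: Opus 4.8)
The plan is to establish the theorem in two halves: the lower bound $\lambda'_{1,2}(T_3)\ge 7$ and the matching upper bound $\lambda'_{1,2}(T_3)\le 7$. Since the known bounds already give $7\le \lambda'_{1,2}(T_3)\le 8$, the real content is the upper bound; I would still sketch the lower bound for completeness, but I expect the author simply cites \cite{lin3} for it or gives a short local-neighborhood counting argument (a vertex of $L(T_3)$ has degree $4$, and one examines the closed second neighborhood of a vertex to see that fewer than $8$ labels, i.e. labels $\{0,\dots,6\}$, cannot be made to fit — in fact this is exactly where the bound $7$ as a \emph{lower} bound on the span comes from, meaning one needs labels $0$ through $7$\ldots so actually care is needed: ``$\ge 7$'' here means the span is at least $7$, i.e. at least $8$ labels are used; the nontrivial claim is that span $8$ suffices to be lowered to span $7$, i.e. that $8$ labels actually do suffice). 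Let me restate: the open problem is whether $7$ or $8$ is the answer, and the theorem asserts $7$; hence the work is to \emph{construct} an $L(1,2)$-vertex labeling of $L(T_3)$ using only labels $\{0,1,2,3,4,5,6,7\}$.

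The main step is therefore an explicit periodic construction. Using the coordinate decomposition $V(L(T_3)) = U_{xy}\cup V_{yz}\cup W_{zx}$ introduced just before the theorem, I would define the label of each vertex as a function of the residues of its two coordinates modulo a small period (I would guess period dividing $8$, perhaps $4$ or $8$, in each axis), giving three explicit formulas $\mathbf{f}(u_{xy})$, $\mathbf{f}(v_{yz})$, $\mathbf{f}(w_{zx})$. Concretely one identifies a fundamental domain $G_S$ (the finite subgraph of $L(T_3)$ drawn in Figure~\ref{square2}), assigns labels to its vertices by hand or by computer search, and tiles the plane by translations. The verification then reduces to a finite check: for every vertex in the fundamental domain, list its neighbors and its distance-$2$ vertices, and confirm the difference-$\ge 1$ and difference-$\ge 2$ constraints. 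Because $L(T_3)$ is $4$-regular and vertex-transitive under the translation group, this is a bounded-size case analysis.

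The hard part will be exhibiting a valid pattern at all: with only $8$ labels available and each closed neighborhood forcing many pairwise separations, the search space is tight, and the reason this was a long-standing open question is presumably that no such pattern was found (or believed to exist) before. So the crux is the combinatorial design of the labeling — once the three residue formulas are written down, the correctness proof is mechanical. I would organize the proof as: (i) state the period and the three formulas; (ii) argue translation-invariance so only finitely many vertices need checking; (iii) for a representative vertex of each of the three types, enumerate the (at most) $4$ neighbors and the distance-$2$ vertices and verify the inequalities, ideally presented compactly in a small table keyed to coordinate offsets; (iv) conclude $\lambda'_{1,2}(T_3)\le 7$, and combine with the known lower bound $\lambda'_{1,2}(T_3)\ge 7$ to finish. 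A secondary subtlety to watch is getting the distance-$2$ relation in $L(T_3)$ exactly right (two edges of $T_3$ at line-graph distance $2$), since a misccount there would invalidate the check; I would pin this down explicitly in terms of the coordinate axes before doing the verification.
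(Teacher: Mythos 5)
Your plan matches the paper's proof exactly in structure: the lower bound is taken from \cite{lin3}, and the upper bound is obtained by an explicit periodic $L(1,2)$-vertex labeling of $L(T_3)$ with labels $\{0,\dots,7\}$, defined via the coordinate decomposition $U_{xy}\cup V_{yz}\cup W_{zx}$ and verified on the finite fundamental domain $G_S$ of Figure~\ref{square2}, then extended by translation. You also correctly identify that the only genuinely hard content is exhibiting the pattern itself.

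That, however, is precisely what is missing, and it is not a presentational omission but the entire substance of the theorem. A proof of $\lambda'_{1,2}(T_3)\le 7$ is a certificate, namely a concrete labeling; saying that one would ``assign labels by hand or by computer search'' and that ``once the three residue formulas are written down, the correctness proof is mechanical'' defers the one step that cannot be deferred. Nothing in your write-up rules out the possibility that no such period-$8$ (or any periodic) pattern exists, which is exactly why the question was open. For reference, the paper's certificate consists of three mod-$8$ formulas, one per vertex class, e.g.\ $f(u_{xy})=\bigl(\bigl(4\lceil \tfrac{x}{2}\rceil+2\lfloor \tfrac{x}{2}\rfloor\bigr)\bmod 8+(5y)\bmod 8\bigr)\bmod 8$ together with analogous affine-mod-$8$ expressions for $v_{yz}$ and $w_{zx}$; the finite verification on $G_S$ then proceeds as you describe. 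Your secondary caution about pinning down the distance-$2$ relation in $L(T_3)$ before checking is well taken and is indeed where such a verification could silently go wrong, but until the formulas (or an explicit labeled fundamental domain) are produced, the upper bound, and hence the theorem, is not established.
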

\begin{proof}
The coloring functions of vertices of $L(T_3)$ are defined as follows.\\
$f(u_{xy})=\left( \left(4 \times \lceil \frac{x}{2}\rceil+ 2 \times \lfloor \frac{x}{2}\rfloor\right)\mod 8 + \left(5 \times y\right)\mod 8 \right)\mod 8, \forall u_{xy}\in U_{xy}.$ \\
 $g(v_{yz})=\left( \left(2+3\times z \right)\mod 8 + \left(  2\times y \right)\mod 8 \right)\mod 8, \forall v_{yz}\in V_{yz}.$ \\
 $h(w_{zx})=\left( \left(1+5\times z \right)\mod 8 + \left(  2\times x \right)\mod 8 \right)\mod 8, \forall w_{zx}\in W_{zx}.$
 
Here we consider that $0\leq (x\mod y) < y$ where $x\in \mathbb{Z}$ and $y\in \mathbb{Z}\setminus \{0\}$.
 
The colors of the vertices of a finite subgraph $G_S$ of $L(T_3)$ are shown in Figure~\ref{square2}. It can be verified that colors of every pair of vertices satisfy all the $L(1,2)$-vertex labeling constraints. It is also evident that the colors obey a regular modulo pattern which can be extended up to infinity and there will be no color conflict between any pair of vertices of $L(T_3)$ if the assigned colors satisfy the coloring functions. The minimum and maximum color used here are $0$ and $7$ respectively. Hence $\lambda_{1,2}(L(T_3)) \leq 7$. It has been shown in \cite{lin3} that $\lambda_{1,2}(L(T_3))\geq 7$. Hence $\lambda'_{1,2}(T_3)=\lambda_{1,2}(L(T_3))= 7$. 
 \end{proof}
 
\subsection{Square grid}

Let us consider the induced subgraph $G$ of $L(T_4)$ as shown in Fig~\ref{sub_12_ver} where all vertices are at mutual distance at most three. 
Let $S_1=\{a,b\}$, $S_2=\{k,l\}$, $S_3=\{c,g\}$, $S_4=\{f,j\}$ and $S_5=\{d,\ e,\ h,\ i\}$.

\begin{definition}\label{vdef1}
The set of vertices in $S_5$ are termed as \textbf{central vertices} in $G$.
\end{definition}

\begin{definition}\label{vdef2}
The set of vertices in $S_1 \cup S_2 \cup S_3 \cup S_4$ are termed as \textbf{peripheral vertices} in $G$.
\end{definition}

Now we have the following observations in $G$. Here the color of vertex $a$ is denoted by $\mathbf{f}(a)$. 

\begin{center}
\begin{figure}[h!]
\centerline{\includegraphics[width=4cm]{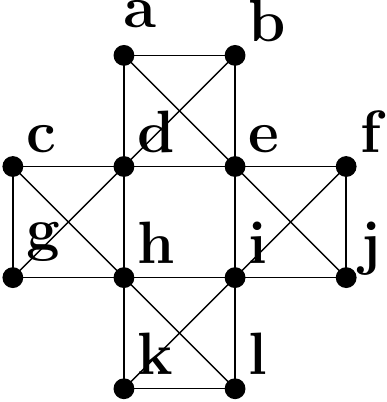}}
\caption{Sub graph $G$ of $L(T_4)$ }
\label{sub_12_ver} 
%\vspace{-0.7cm}
\end{figure} 
\end{center}

\begin{observation}\label{vobs1}:
If colors of vertices of $G$ are all distinct then $\lambda_{1,2}(G)\geq 11$.
\end{observation}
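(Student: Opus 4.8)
The plan is to count how many distinct labels are needed to colour the twelve vertices of $G$ under the $L(1,2)$ constraints, and to show that twelve distinct labels cannot be fitted into $\{0,1,\dots,10\}$, i.e. that any valid assignment of distinct colours to $V(G)$ forces the span to be at least $11$. First I would record the structure of $G$ precisely from Fig.~\ref{sub_12_ver}: the four central vertices $S_5=\{d,e,h,i\}$ should form a clique (pairwise at distance $1$), and each central vertex is adjacent to several peripheral vertices, while the peripheral vertices come in the paired sets $S_1,\dots,S_4$. The key point to extract is that $G$ has small diameter (all vertices at mutual distance at most $3$), so distance-$1$ pairs need difference $\ge 1$ and distance-$2$ pairs need difference $\ge 2$; with all colours distinct the distance-$1$ constraint is automatic, so only the distance-$2$ constraints bite.

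Next I would argue about the central vertices. Since $d,e,h,i$ are mutually adjacent and moreover every pair of them is within distance $2$ of a common neighbour structure, I would check that any two central vertices that are \emph{not} adjacent (if any) must differ by $\ge 2$; more usefully, I would look at which peripheral vertices lie at distance exactly $2$ from each central vertex. The heart of the argument is a pigeonhole/interval-packing estimate: take the sorted list of the twelve distinct colours $0\le c_1<c_2<\dots<c_{12}\le n$. Consecutive colours $c_j,c_{j+1}$ may differ by $1$ only when the corresponding two vertices are at distance $1$ in $G$ (adjacent); whenever they are at distance $2$ we need $c_{j+1}-c_j\ge 2$. So $n=c_{12}-c_1\ge 11 + (\text{number of consecutive pairs in the sorted order that are at distance }2)$. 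Thus it suffices to show that in \emph{every} ordering of $V(G)$ induced by a valid colouring, at least one consecutive pair is non-adjacent; equivalently, that $G$ has no Hamiltonian path, or — more carefully — that no Hamiltonian path of $G$ can be realised as the colour order because of the extra distance-$2$ separations forced elsewhere. Actually the cleaner route: show directly that $G$ contains an induced substructure (for instance a vertex together with a set of vertices all pairwise at distance $\le 2$ from it and from each other, of size $t$) forcing a run of $t$ colours to be spaced $\ge 2$ apart, giving $n\ge 10 + 1$.

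Concretely, I would identify a set $A\subseteq V(G)$ of size, say, $6$ whose vertices are pairwise at distance $2$ (an independent set in $G$ all of whose pairwise distances are $2$); such a set forces $6$ colours that are pairwise $\ge 2$ apart, hence spanning at least $2\cdot 5=10$, i.e. using colours whose max$-$min is $\ge 10$; then one more vertex outside the arithmetic-like pattern, at distance $2$ from the extremes, pushes the span to $11$. The precise set $A$ and the extra vertex come straight from the adjacency picture in Fig.~\ref{sub_12_ver}; I expect $S_5$ together with one vertex from each of two of the paired sets to do the job, since central vertices are mutually adjacent but each is at distance $2$ from the ``far'' peripheral vertex of several pairs.

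The main obstacle I anticipate is purely bookkeeping: reading off from Fig.~\ref{sub_12_ver} the exact distance-$1$ versus distance-$2$ relations among all $\binom{12}{2}=66$ vertex pairs, and then choosing the witnessing configuration so that the interval-packing bound lands exactly at $11$ rather than $10$. There is no deep idea beyond the pigeonhole on sorted colours plus the observation that distance-$1$ edges are the only way to ``save'' a unit of span; the care is entirely in verifying that the chosen subset of $G$ genuinely has all the claimed pairwise distances equal to $2$, so that no two of its colours may be consecutive. Once that subset is pinned down, the inequality $n\ge 11$ is immediate, and combined with the known upper bound $\lambda'_{1,2}(T_4)\le 11$ from \cite{lin3} this would, together with the later observations handling the case of repeated colours, yield $\lambda'_{1,2}(T_4)=11$.
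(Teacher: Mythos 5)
The observation you are proving is far simpler than your proposal makes it: its hypothesis is that all twelve vertices of $G$ receive distinct colours, and under that hypothesis the conclusion is immediate by counting --- twelve distinct labels drawn from $\{0,1,\dots,n\}$ force $n\geq 11$. That is the paper's entire proof. Your own sorted-colour inequality already contains this: with $0\le c_1<c_2<\dots<c_{12}\le n$ you get $n\ge c_{12}-c_1\ge 11$ before adding any contribution from distance-$2$ pairs, so the proof is finished at that point. Everything after it --- the Hamiltonian-path discussion, the search for a six-element set of pairwise distance-$2$ vertices, the extra vertex pushing the span from $10$ to $11$ --- is aimed at a stronger statement than this observation claims, and is in any case left as a sketch (``I would identify'', ``I expect'') rather than carried out; some of its structural assertions would also need verification against Fig.~\ref{sub_12_ver} (for instance whether $S_5$ is really a clique and whether a six-element pairwise-distance-$2$ independent set exists). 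None of that matters here: the distance constraints play no role in this observation, which is pure pigeonhole on the number of vertices. The distance-based analysis you are gearing up for is what the paper actually deploys later, in Observations~\ref{vobs2}--\ref{vobs2_2} and Theorem~\ref{vth1}, where colours are allowed to repeat and the real work begins.
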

\begin{proof}
As $G$ has $12$ vertices, if all of them get distinct colors then  $\lambda_{1,2}(G)\geq 11$.
\end{proof}

\begin{observation}\label{vobs2}:
No color can be used thrice in $G$. Colors used at the central vertices in $S_5$ can not be reused in $G$. Colors used at the peripheral vertices in $S_1$ can be reused only at the peripheral vertices in $S_2$. Similarly, colors used at the peripheral vertices in $S_3$ can be reused only at the peripheral vertices in $S_4$.
\end{observation}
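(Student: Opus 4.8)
The plan is to recast all three assertions as statements about which pairs of vertices of $G$ lie at distance exactly $3$ in $L(T_4)$, and then to settle those by a direct finite check. Recall that in any $L(1,2)$-vertex labeling two vertices receiving a common color must be at distance at least $3$; since every pair of vertices of $G$ is at mutual distance at most $3$, a color can be shared by $u,v\in V(G)$ only when $d(u,v)=3$. So it suffices to describe, for each vertex $v$ of $G$, the set $D_3(v)=\{w\in V(G):d(v,w)=3\}$, and then to verify (i) $D_3(v)=\emptyset$ for every central vertex $v\in S_5$; (ii) $D_3(v)\subseteq S_2$ for $v\in S_1$ and $D_3(v)\subseteq S_1$ for $v\in S_2$; and (iii) $D_3(v)\subseteq S_4$ for $v\in S_3$ and $D_3(v)\subseteq S_3$ for $v\in S_4$. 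Once (i)--(iii) are in hand the observation is immediate.

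To compute the distances I would identify each of the twelve vertices of $G$ with its corresponding edge of the square grid $T_4$, written as a pair of lattice points exactly as in Fig~\ref{sub_12_ver}. For two distinct edges $e=\{p,q\}$ and $e'=\{p',q'\}$ of $T_4$ one has
\[
 d_{L(T_4)}(e,e')\;=\;1+\min\{\,d_{T_4}(x,y)\;:\;x\in\{p,q\},\ y\in\{p',q'\}\,\},
\]
where $d_{T_4}$ is the ordinary graph distance in $T_4$, i.e.\ the $\ell_1$-distance between the lattice points; this identity follows from the fact that two edges of $T_4$ are adjacent in $L(T_4)$ precisely when they share an endpoint, by a one-line induction on the distance. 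Feeding the twelve coordinates into this formula, one reads off that every vertex of $S_5$ is within distance $2$ of each of the other eleven vertices, that the only distance-$3$ pairs among the peripheral vertices are those running between $S_1$ and $S_2$ or between $S_3$ and $S_4$, that the two vertices inside any single block $S_i$ lie at distance at most $2$, and that no distance-$3$ pair joins $S_1\cup S_2$ to $S_3\cup S_4$; this is precisely (i)--(iii).

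Granting (i)--(iii), I would finish as follows. Form the graph $H$ on $V(G)$ whose edges are exactly the pairs at distance $3$ in $L(T_4)$; the set of vertices carrying any fixed color in an $L(1,2)$-labeling of $G$ is then a clique of $H$. By (i)--(iii), $H$ decomposes into a bipartite graph between $S_1$ and $S_2$, a bipartite graph between $S_3$ and $S_4$, and the isolated vertices of $S_5$, so $H$ is triangle-free and no clique of $H$ has more than two vertices: hence no color is used thrice. Moreover a color on a vertex of $S_5$ can appear nowhere else, since that vertex is isolated in $H$; a color on a vertex of $S_1$ can recur only at its $H$-neighbours, which all lie in $S_2$; and likewise a color on a vertex of $S_3$ can recur only in $S_4$. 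This gives all three assertions.

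The only substantive work is the distance computation in the second paragraph, a bounded and entirely mechanical case analysis over the twelve chosen vertices; the single point that needs care is that the distances governing the labeling constraints are those in the host graph $L(T_4)$, not in the induced subgraph $G$ (a shortest $u$--$v$ path may leave $G$), which is exactly why I would work with $T_4$-coordinates and the displayed formula rather than with $G$ in isolation. I do not expect any conceptual obstacle beyond this bookkeeping.
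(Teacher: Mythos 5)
Your proposal is correct and follows essentially the same route as the paper: both arguments reduce the observation to determining which pairs of the twelve vertices of $G$ are at distance exactly $3$ in $L(T_4)$, and then note that the distance-$3$ pairs occur only between $S_1$ and $S_2$ or between $S_3$ and $S_4$ (with central vertices having none), which immediately yields all three claims. Your added formalities --- the line-graph distance formula $d_{L(T_4)}(e,e')=1+\min d_{T_4}$ and the triangle-free conflict graph $H$ --- merely make explicit the finite distance check that the paper's proof asserts directly.
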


\begin{proof}
No three vertices are mutually distant three apart. Hence no color can be used thrice in $G$. For any central vertex in $S_5$ there does not exist any vertex in $G$ which is distance three apart from it. So colors used in the central vertices in $S_5$ can not be reused in $G$. For all peripheral vertices in $S_1 \cup S_2$, $d(x,y)=3$ only when $x \in S_1$ and $y \in S_2$. Hence color used at peripheral vertex in $S_1$ can only be reused in $S_2$. Similarly, color used at peripheral vertex in $S_3$ can only be reused in $S_4$.
\end{proof}

\begin{observation}\label{vobs2_1}:
If $\mathbf{f}(x)=\mathbf{f}(y)=\mathbf{c}$ where $x \in S_1$ and $y \in S_2$ then either $\mathbf{c} \pm 1$ is to be used in $(S_1 \cup S_2) \setminus \{x,y\}$ or it should remain unused in $G$. Similarly, if $\mathbf{f}(x)=\mathbf{f}(y)=\mathbf{c}$ where $x \in S_3$ and $y \in S_4$ then either $\mathbf{c} \pm 1$ is to be used in $(S_3 \cup S_4) \setminus \{x,y\}$ or it should remain unused in $G$.
\end{observation}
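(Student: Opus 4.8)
The plan is to prove Observation~\ref{vobs2_1} by a short structural/counting argument about which vertices of $G$ lie at mutual distance three, exactly as in the style of Observations~\ref{vobs2} and~\ref{vobs2_1}'s predecessors. First I would fix the hypothesis $\mathbf{f}(x)=\mathbf{f}(y)=\mathbf{c}$ with $x\in S_1$, $y\in S_2$, and consider the color $\mathbf{c}+1$ (the argument for $\mathbf{c}-1$ is symmetric). The key point is that in an $L(1,2)$-labeling, a color and its consecutive value must themselves be at distance at least $2$; so any vertex $z$ colored $\mathbf{c}+1$ must satisfy $d(z,x)\ge 2$ and $d(z,y)\ge 2$, i.e. $z$ is non-adjacent to both $x$ and $y$. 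I would then inspect the subgraph $G$ of Figure~\ref{sub_12_ver} and determine the set of vertices that are simultaneously non-adjacent to $x$ and to $y$: by the adjacency structure of $L(T_4)$ (and the fact, from Observation~\ref{vobs2}, that $S_1$-vertices are at distance $3$ only from $S_2$-vertices and vice versa), the only candidates are the two remaining vertices of $S_1\cup S_2$, namely $(S_1\cup S_2)\setminus\{x,y\}$. Hence $\mathbf{c}+1$, if it is used in $G$ at all, must be used there; the same holds for $\mathbf{c}-1$, giving the statement. The second half, with $S_3,S_4$ in place of $S_1,S_2$, is identical by the symmetry of $G$.

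The key steps, in order, are: (i) recall that consecutive colors must be assigned to vertices at distance $\ge 2$ (an immediate consequence of the $L(1,2)$ constraint with $h=1$); (ii) translate this into: a vertex holding $\mathbf{c}\pm 1$ is adjacent to neither $x$ nor $y$; (iii) use Figure~\ref{sub_12_ver} together with Observation~\ref{vobs2} to enumerate the vertices of $G$ adjacent to neither $x$ nor $y$, and check that this set is contained in $(S_1\cup S_2)\setminus\{x,y\}$; (iv) conclude that $\mathbf{c}\pm 1$ is either used in $(S_1\cup S_2)\setminus\{x,y\}$ or unused in $G$, and invoke symmetry for the $S_3$–$S_4$ case.

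I expect the main obstacle to be step (iii): one must be careful about the exact adjacency pattern near $x$ and $y$ in the induced subgraph $G$, since "not adjacent to $x$" alone leaves several vertices, and it is the conjunction "not adjacent to $x$ \emph{and} not adjacent to $y$" — combined with the distance-three classification already established in Observation~\ref{vobs2} — that pins the candidates down to the two leftover peripheral vertices. This is a finite check on the twelve-vertex graph of Figure~\ref{sub_12_ver}, so there is no real difficulty, only the need to be exhaustive; everything else is routine, and the $S_3$–$S_4$ half follows verbatim by the obvious automorphism of $G$ exchanging the role of the two peripheral pairs.
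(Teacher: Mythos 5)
Your step (i) inverts the $L(1,2)$ constraint, and this breaks the whole argument. In an $L(1,2)$-labeling, adjacent vertices need colors differing by at least $1$ and vertices at distance exactly $2$ need colors differing by at least $2$; hence two vertices carrying \emph{consecutive} colors are perfectly allowed to be adjacent --- what is forbidden is that they be at distance exactly $2$. You assert the opposite: that a vertex colored $\mathbf{c}\pm 1$ must satisfy $d(z,x)\ge 2$ and $d(z,y)\ge 2$, i.e.\ be non-adjacent to both $x$ and $y$. The correct condition is $d(z,x)\neq 2$ and $d(z,y)\neq 2$. These are different sets: your criterion wrongly \emph{excludes} the remaining vertices of $S_1$ and $S_2$ (which are adjacent to $x$ and $y$ respectively --- indeed, in Case 1 of Theorem~\ref{vth1} the paper places $c_1+1$ on $b$, which is adjacent to $a$ colored $c_1$, so your version of the observation would be violated by the paper's own labeling), and it wrongly \emph{admits} any vertex lying at distance exactly $2$ from both $x$ and $y$, which is precisely the situation that must be ruled out.

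The intended argument is the mirror image of yours: one checks that every vertex $z\in V(G)\setminus(S_1\cup S_2)$ satisfies $d(z,x)=2$ or $d(z,y)=2$ (this uses that $d(x,y)=3$ and that all vertices of $G$ are at mutual distance at most $3$, together with Observation~\ref{vobs2}'s classification of which pairs attain distance $3$), so the distance-$2$ constraint with $k=2$ forbids $\mathbf{c}\pm 1$ everywhere outside $S_1\cup S_2$; since $x$ and $y$ already carry $\mathbf{c}$, the only places left for $\mathbf{c}\pm 1$ are $(S_1\cup S_2)\setminus\{x,y\}$. Your overall skeleton (finite check on the twelve-vertex graph, symmetry for the $S_3$--$S_4$ half) is the right shape, but you must replace ``non-adjacent to $x$ and $y$'' by ``not at distance exactly $2$ from $x$ nor from $y$'' throughout before the enumeration in your step (iii) yields the claimed set.
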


\begin{proof}
Note that for all vertices $z \in V(G) \setminus (S_1 \cup S_2)$, either $d(z,x)=2$ or $d(z,y)=2$, where $x \in S_1$ and $y \in S_2$. Hence $\mathbf{c} \pm 1$ can not be used in  $V(G) \setminus (S_1 \cup S_2)$. So $\mathbf{c} \pm 1$ can only be used in $(S_1 \cup S_2) \setminus \{x,y\}$ or it should remain unused in $G$.  Similarly, if $\mathbf{f}(x)=\mathbf{f}(y)=\mathbf{c}$, where $x \in S_3$ and $y \in S_4$, then $\mathbf{c} \pm 1$ can only be used in $(S_3 \cup S_4) \setminus \{x,y\}$ or it should remain unused in $G$.
\end{proof}

\begin{observation}\label{vobs2_2}:
Let $\mathbf{f}(x)=\mathbf{f}(y)=\mathbf{c}$ where $x \in S_1$ and $y \in S_2$. If $|\mathbf{f}(x)-\mathbf{f}(x')| \geq 2$, where $x' \in S_1 \setminus \{x\}$, then one of $\mathbf{c} \pm 1$ must remain unused in $G$. Similarly if $|\mathbf{f}(y)-\mathbf{f}(y')| \geq 2$, where $y' \in S_2 \setminus \{y\}$, then one of $\mathbf{c} \pm 1$ must remain unused in $G$. Similar facts hold when $x \in S_3$, $x' \in S_3 \setminus \{x\}$, $y \in S_4$ and $y' \in S_4 \setminus \{y\}$.
\end{observation}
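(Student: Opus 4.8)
The plan is to argue by cases on the structure of the subgraph $G$, exploiting the fact that the only distance-three pairs in $G$ are the matched peripheral pairs $S_1 \leftrightarrow S_2$ and $S_3 \leftrightarrow S_4$ (Observation~\ref{vobs2}). Suppose, for contradiction, that $\mathbf{f}(x)=\mathbf{f}(y)=\mathbf{c}$ with $x \in S_1$, $y \in S_2$, and that \emph{both} $\mathbf{c}+1$ and $\mathbf{c}-1$ are used somewhere in $G$. By Observation~\ref{vobs2_1}, both of $\mathbf{c}\pm 1$ must then appear inside $(S_1\cup S_2)\setminus\{x,y\}$, i.e.\ among the two remaining vertices $x' \in S_1\setminus\{x\}$ and $y' \in S_2\setminus\{y\}$. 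So the four peripheral vertices $\{x,y,x',y'\}$ carry exactly the four labels $\{\mathbf{c}-1,\mathbf{c},\mathbf{c},\mathbf{c}+1\}$ in some arrangement, with $x,y$ getting $\mathbf{c}$.

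The core of the argument is then a short adjacency/distance check within this four-vertex peripheral structure. First I would record which of the pairs among $\{x,x'\}$, $\{y,y'\}$, and the cross pairs are at distance $1$ versus distance $2$ in $G$ (reading this off Fig.~\ref{sub_12_ver}: within $S_1$ the two vertices are adjacent or at distance two, and likewise within $S_2$, while $x$ is at distance $\le 2$ from $y'$ and $y$ from $x'$). The hypothesis of the observation is precisely $|\mathbf{f}(x)-\mathbf{f}(x')|\ge 2$, so $x'$ cannot receive $\mathbf{c}+1$ or $\mathbf{c}-1$ — contradicting the conclusion of the previous paragraph that $\{x',y'\}$ must absorb both of $\mathbf{c}\pm1$. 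This forces one of $\mathbf{c}\pm1$ to be placed outside $(S_1\cup S_2)$, which Observation~\ref{vobs2_1} forbids, so it must remain unused in $G$. The symmetric statements (with $|\mathbf{f}(y)-\mathbf{f}(y')|\ge 2$, or with the pair $S_3,S_4$ in place of $S_1,S_2$) follow verbatim by the symmetry of $G$ under relabeling the axes, so I would dispatch them with "similarly."

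The step I expect to be the main obstacle is making the distance bookkeeping airtight: I must be certain that $|\mathbf{f}(x)-\mathbf{f}(x')|\ge 2$ together with the distance relations genuinely rules out $x'$ (and not $y'$) taking a value in $\{\mathbf{c}-1,\mathbf{c}+1\}$, and that no alternative placement of the labels $\mathbf{c}\pm1$ inside $(S_1\cup S_2)$ survives. Concretely, if $x'$ took $\mathbf{c}+1$ that contradicts the hypothesis directly, and if $y'$ took \emph{both} roles that is impossible since $y'$ is a single vertex; so at most one of $\mathbf{c}\pm1$ can live in $(S_1\cup S_2)\setminus\{x,y\}$, and by Observation~\ref{vobs2_1} the other cannot live anywhere in $G$. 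That pins down the proof. A secondary subtlety is simply invoking the right earlier observations in the right order, but no new combinatorial idea beyond Observations~\ref{vobs2}--\ref{vobs2_1} is needed.
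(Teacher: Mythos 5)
Your argument is correct and is essentially the paper's own proof: the hypothesis $|\mathbf{f}(x)-\mathbf{f}(x')|\ge 2$ rules out $x'$ as a host for either of $\mathbf{c}\pm 1$, Observation~\ref{vobs2_1} confines both values to $\{x',y'\}$ or to being unused, and the single vertex $y'$ can absorb at most one of them, so the other is unused. The intermediate "adjacency/distance check" you propose is unnecessary — Observation~\ref{vobs2_1} has already done that work — but your final paragraph contains exactly the pigeonhole step the paper uses.
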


\begin{proof}
Since $|\mathbf{f}(x)-\mathbf{f}(x')| \geq 2$, $\mathbf{f}(x')\neq \mathbf{c} \pm 1$. Hence from observation \ref{vobs2_1}, one of $\mathbf{c} \pm 1$ must remain unused in $G$.  
\end{proof}

If no color is reused in $G$, then $\lambda_{1,2}(G)\geq 11$ from observation~\ref{vobs1}. To make $\lambda_{1,2}(G) < 11$, at least one color must be reused in $G$. From observation~\ref{vobs2}, there are at most $4$ distinct pairs of peripheral vertices in $G$ where a pair can have the same color. Now consider the subgraph $G_1$ of $L(T_4)$ as shown in Fig.~\ref{sub_12_ver_extnd}.a. Note that $G_1$ consists of $5$ subgraphs $G'$, $G'_1$, $G'_2$, $G'_3$ and $G'_4$ which all are isomorphic to $G$ having central vertices $\{ d,h,i,e\}$, $\{t_1,c,d,a \}$, $\{ b,e,f,t_2\}$, $\{ i,l,t_3,j\}$ and $\{g,t_4,k,h \}$ respectively. Based on the span requirements of coloring $G_1$, we derive the following theorem.

%In the following theorem, when $\mathbf{c}$ is reused and $\mathbf{c} \pm 1$ is also used as a color, we implicitly consider $\mathbf{c}$ is an non extreme color. If any one of $\mathbf{c} \pm 1$ does not exist then also the same result can be obtained similarly. 

\begin{center}
\begin{figure}[h!]
\centerline{\includegraphics[width=12cm]{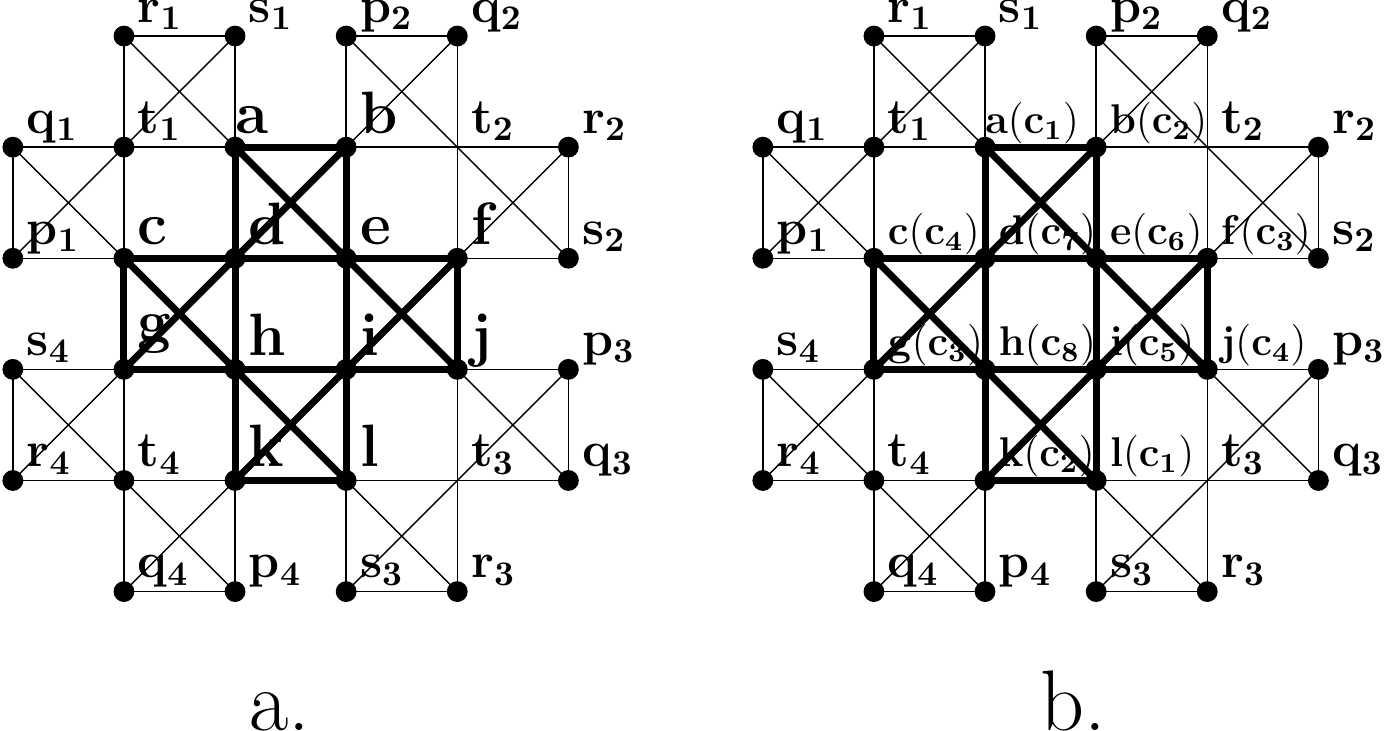}}
\caption{Different coloring of subgraph $G_1$. }
\label{sub_12_ver_extnd} 
\end{figure} 
\end{center}

\begin{theorem}
\label{vth1}$\lambda_{1,2}(L(T_4))= \lambda_{1,2}(G_1)= 11$.

\end{theorem}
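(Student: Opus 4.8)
The plan is to establish the two equalities separately. For the lower bound $\lambda_{1,2}(G_1)\ge 11$, I would argue by contradiction: suppose $G_1$ admits an $L(1,2)$-labeling with span at most $10$, i.e. using only colors $\{0,1,\dots,10\}$. Restricting such a labeling to the central copy $G'$ (with central vertices $\{d,h,i,e\}$) and to each of the four outer copies $G'_1,\dots,G'_4$, each restriction is an $L(1,2)$-labeling of a graph isomorphic to $G$ with span $\le 10$. By Observation~\ref{vobs1} this forces a color to be reused inside each copy, and by Observation~\ref{vobs2} the reused pair in each copy must be one of the (at most four) admissible peripheral pairs $S_1\!-\!S_2$ or $S_3\!-\!S_4$ of that copy. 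The key step is then a careful bookkeeping argument: I would enumerate, up to the symmetries of $G_1$, the possible choices of which peripheral pair is monochromatic in each of the five copies, and show that Observations~\ref{vobs2_1} and~\ref{vobs2_2} force too many colors to go ``unused'' simultaneously across the overlapping copies, so that $11$ colors are in fact needed — contradiction. Concretely, each reused color $\mathbf{c}$ in a copy either drags $\mathbf{c}+1$ or $\mathbf{c}-1$ into the same $S_i\cup S_j$ pair-region or wastes it entirely (Obs.~\ref{vobs2_1}), and a ``spread'' peripheral pair wastes one of $\mathbf{c}\pm1$ outright (Obs.~\ref{vobs2_2}); since the copies share the vertices $d,e,h,i$ and many peripheral vertices, a color wasted ``in $G$'' for one copy interacts with the color budget of the adjacent copies. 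Tallying, $11$ colors on $12$ vertices minus one saved pair per copy is exactly tight inside a single $G$, but the five copies cannot all save a pair consistently, which yields $\lambda_{1,2}(G_1)\ge 11$.

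For the upper bound $\lambda_{1,2}(G_1)\le 11$ — equivalently $\lambda_{1,2}(L(T_4))\le 11$, since $G_1\subseteq L(T_4)$ gives $\lambda_{1,2}(G_1)\le\lambda_{1,2}(L(T_4))$ and the reverse inequality $\lambda_{1,2}(L(T_4))\le 11$ already appears in the literature cited in the introduction — I would exhibit an explicit periodic vertex-labeling of $L(T_4)$ using colors $\{0,\dots,11\}$, in the same style as the coloring functions given for $L(T_3)$ in Theorem~\ref{vth2}: assign to each vertex a value obtained from a linear form in its two axis-coordinates, reduced modulo $12$, and verify the finite ``fundamental domain'' (the subgraph $G_1$ of Fig.~\ref{sub_12_ver_extnd}) satisfies all distance-$1$ and distance-$2$ constraints; periodicity then propagates the labeling to all of $L(T_4)$. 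Combining the two bounds with $\lambda_{1,2}(G_1)\le \lambda_{1,2}(L(T_4))=11$ (using the known upper bound of $11$ and the just-proved lower bound of $11$ on the subgraph $G_1$) gives the chain $11\le\lambda_{1,2}(G_1)\le\lambda_{1,2}(L(T_4))\le 11$, forcing all three quantities to equal $11$, and hence $\lambda'_{1,2}(T_4)=\lambda_{1,2}(L(T_4))=11$.

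I expect the main obstacle to be the lower-bound case analysis on $G_1$: the five copies of $G$ overlap heavily, so one must track which specific peripheral pair is monochromatic in each copy and how the forced-unused colors $\mathbf{c}\pm1$ from one copy collide with the requirements of the others, ruling out every configuration. The figure (Fig.~\ref{sub_12_ver_extnd}) with its several sub-cases suggests the authors split this into a handful of scenarios according to how many and which peripheral pairs are reused; I would organize the argument by first reducing via symmetry, then handling the ``few reuses'' scenarios (where Observation~\ref{vobs1} applied to some copy almost immediately gives $11$) and finally the ``many reuses'' scenario (where Observations~\ref{vobs2_1}–\ref{vobs2_2} force an untenable number of wasted colors). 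The upper-bound half should be routine once the right linear form modulo $12$ is found.
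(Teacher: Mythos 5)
Your overall strategy coincides with the paper's: the upper bound $\lambda_{1,2}(L(T_4))\le 11$ is quoted from the literature (He and Lin), and the lower bound is to be extracted from the finite subgraph $G_1$ by classifying, up to symmetry, which peripheral pairs of each of the five embedded copies of $G$ are monochromatic and then exploiting Observations~\ref{vobs1}--\ref{vobs2_2}; the concluding chain $11\le\lambda_{1,2}(G_1)\le\lambda_{1,2}(L(T_4))\le 11$ is also the correct way to assemble the two halves.

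However, the lower-bound half of your proposal is a plan rather than a proof, and the one sentence you offer in place of the deferred work does not hold up. The entire mathematical content of the theorem is the case analysis you postpone (``I would enumerate \dots and show \dots''): the paper needs four substantial cases (at most one reused peripheral pair; two reused pairs in the two distinct configurations $\{a,l\},\{c,j\}$ versus $\{a,l\},\{b,k\}$; three pairs; four pairs), each of which requires tracking concrete forced assignments --- $c_1\pm1$ pushed into $\{b,k\}$, the value $\mathbf{f}(d)=c_1+n$ being driven up to $n\ge 4$, the vertices $r_2,s_2,p_4,q_4$ of the neighbouring copies, and so on --- before a twelfth color is forced. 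Your summary, ``$11$ colors on $12$ vertices minus one saved pair per copy is exactly tight inside a single $G$, but the five copies cannot all save a pair consistently,'' is not a valid substitute: a copy of $G$ with one monochromatic peripheral pair uses $11$ distinct colors and therefore only certifies span $\ge 10$, which is entirely consistent with the assumed span of $10$; and it is not true that the copies ``cannot all save a pair'' --- the paper's Cases~2--4 explicitly analyse colorings in which two, three or four pairs are saved, and the contradiction arises not from a counting obstruction on reuse but from the propagation of the forced neighbours $\mathbf{c}\pm 1$ (Observations~\ref{vobs2_1} and~\ref{vobs2_2}) across the overlapping copies, which eventually forces either a color unused inside the range or a color at distance at least $11$ from the minimum. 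Until that propagation argument is carried out configuration by configuration, the lower bound remains unproved.
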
 
\begin{proof}

Due to He and Lin~\cite{lin3}, it is already known that $\lambda_{1,2}(L(T_4))\leq 11$. Thus in this section our main goal will be to improve the lower bound and show $\lambda_{1,2}(L(T_4))\geq 11$. We do some case analysis depending on how the peripheral vertices are colored. 
 \newline
\textbf{Case 1: When at most one pair of peripheral vertices use the same color in any sub graph of $L(T_4)$ isomorphic to $G$.}\\ If no color is reused in $G'$, then $\lambda_{1,2}(G')\geq 11$ from observation~\ref{vobs1}. We now consider the case when exactly one pair reuse a color in $G'$. Without loss of generality, consider $\mathbf{f}(a)=\mathbf{f}(l)=c_1$. From Observation~\ref{vobs2_1}, $c_1 \pm 1$ can only be put in $\{b,k\}$. Let $\mathbf{f}(k)=c_1-1$ and $\mathbf{f}(b)=c_1+1$. We assume that $c_1-1$ is the minimum color. Let us consider $\mathbf{f}(d)=c_1+n$ where $n \in \mathbb{N}$ and $n \geq 2$. From observation~\ref{vobs2_2}, $x \in \{c_1,c_1+n\}$ can be reused in $G'_2$ only if one of $x \pm 1$ remains unused in $G'_2$. In either case, $\lambda_{1,2}(G'_2)\geq 11$. So $x$ can not be reused in $G'_2$. Since $\mathbf{f}(a)=\mathbf{f}(l)=c_1$, $c_1-1$ can only be put in $\{r_2,s_2\}$ as vertex $b$ is already colored and for all other vertices $z \in V(G'_2)\setminus \{r_2,s_2\}$, either $d(z,a)=2$ or $d(z,l)=2$. Without loss of generality, let $\mathbf{f}(r_2)=c_1-1$. In that case, $c_1+n \pm 1$ can only be put in $\{e,s_2\}$. Without loss of generality, let $\mathbf{f}(e)=c_1+n-1$ and $\mathbf{f}(s_2)=c_1+n+1$. Since $\mathbf{f}(a)=\mathbf{f}(l)=c_1$, $\mathbf{f}(i)\neq c_1 \pm 1$ and hence $\vert \mathbf{f}(l)-\mathbf{f}(i) \vert \geq  2$. Now if $\vert \mathbf{f}(d)- \mathbf{f}(c) \vert \geq 2$, then from observation ~\ref{vobs2_2}, one of $\mathbf{f}(c)\pm 1$, $\mathbf{f}(d)\pm 1$ and $\mathbf{f}(i)\pm 1$ remains unused in $G'_4$ if $\mathbf{f}(c)$ or $\mathbf{f}(d)$ or $\mathbf{f}(i)$ is reused in $G'_4$ respectively. In either case, this implies $\lambda_{1,2}(G'_4)\geq 11$. So $\vert \mathbf{f}(d)- \mathbf{f}(c) \vert = 1$ and $\mathbf{f}(c)=c_1+n+1$. There are $5$ more vertices $\{g,h,i,j,f\}$ in $G'$ which are to be colored with $5$ distinct colors. Hence at least color $c_1+n+6$ must be used. Observe that if $\mathbf{f}(f)=c_1+n+2$ then $\vert \mathbf{f}(e)-\mathbf{f}(f)\vert = 3$ and $\vert \mathbf{f}(k)-\mathbf{f}(h)\vert \geq 3$ implying  $\lambda_{1,2}(G'_3)\geq 11$ from observation~\ref{vobs2_2}. As $d(s_2,i)=d(s_2,j)=2$ and $\mathbf{f}(s_2)=c_1+n+1$, we get $\mathbf{f}(i) \neq c_1+n+2$ and $\mathbf{f}(j) \neq c_1+n+2$. Therefore, either $\mathbf{f}(g)=c_1+n+2$ or $\mathbf{f}(h)=c_1+n+2$. So, $\mathbf{f}(p_4) \neq c_1+n+1$ and $\mathbf{f}(q_4) \neq c_1+n+1$. In that case, $\mathbf{f}(p_4)$ and $\mathbf{f}(q_4)$ must be in $\{c_1+n,c_1+n-1\}$ if color $c_1+n$ is to be reused in $G'_4$, otherwise, $\lambda_{1,2}(G_1)\geq 11$. As $c_1$ can not be reused in $G'_4$, either $\mathbf{f}(r_4)=c_1+1$ or $\mathbf{f}(s_4)=c_1+1$. Let $\mathbf{f}(r_4)=c_1+1$. When $n=2$, $c_1+n-1=c_1+1$ and when $n=3$, $c_1+n-1=c_1+2$. As $d(p_4,l)=d(p_4,r_4)=d(q_4,l)=d(q_4,r_4)=2$, $\mathbf{f}(p_4), \mathbf{f}(q_4) \notin \{c_1+1,c_1+2\}$. So, $n \geq 4$ and hence $c_1+n+6 \geq c_1+10$. So at least $12$ color are required in $G_1$ including $c_1-1$ and $c_1+10$. Hence $\lambda_{1,2}(G_2)\geq 11$.\\

\textbf{Case 2: There exists at least one subgraph of $L(T_4)$ isomorphic to $G$ where two pairs of peripheral vertices use a color each.} \\ There are two different ways of reusing two colors in $G'$.\\
{\bf Case 2.1}: First consider the case when $\mathbf{f}(a)=\mathbf{f}(l)=c_1$ and $\mathbf{f}(c)=\mathbf{f}(j)=c_2$. From observation \ref{vobs2_1}, $c_1 \pm 1$ and $c_2\pm 1$ must be used in $\{b,k\}$ and $\{g,f\}$ respectively. From observation~\ref{vobs2}, $c_1$ can only be reused in $\{r_2,s_2\}$ in $G'_2$. But $\mathbf{f}(r_2)\neq c_1$ and $\mathbf{f}(s_2)\neq c_1$ as $\vert \mathbf{f}(b)-c_1 \vert=1$ and $d(b,r_2)=d(b,s_2)=2$. Again, from observation~\ref{vobs2}, $c_2$ can only be reused in $\{p_2,q_2\}$. But $\mathbf{f}(p_2)\neq c_2$ and $\mathbf{f}(q_2)\neq c_2$ as $\vert \mathbf{f}(f)-c_2 \vert=1$ and $d(f,p_2)=d(f,q_2)=2$. From observation \ref{vobs2_1}, if $\mathbf{f}(i)$ is to be reused in $G'_2$, then $\vert \mathbf{f}(i)-c_2 \vert=1$. But $\mathbf{f}(i) \neq c_2 \pm 1$ as $d(c,i)=2$ and $\mathbf{f}(c)=c_2$. If $\mathbf{f}(d)$ is to be reused in $G'_2$, then $\vert \mathbf{f}(d)-c_1 \vert=1$. But $\mathbf{f}(d) \neq c_1 \pm 1$ as $d(d,l)=2$ and $\mathbf{f}(l)=c_1$. Therefore, no color can be reused in $G'_2$ and hence $\lambda_{1,2}(G_1) \geq 11$.

{\bf Case 2.2}: Consider the case when $\mathbf{f}(a)=\mathbf{f}(l)=c_1$ and $\mathbf{f}(b)=\mathbf{f}(k)=c_2$. Without loss of generality, assume   $c_2 > c_1$. From observation \ref{vobs2_1}, $c_1 \pm 1$ and $c_2 \pm 1$ must be used in $\{b,k\}$ and $\{a,l\}$ respectively. Even if we set $c_2=c_1+1$,  at least one of $c_1-1$ and $c_2+1$ must remain unused in $G'$. So the $8$ vertices in $V(G') \setminus (\{a,l\}\cup \{b,k\})$ must get $8$ distinct colors other than $c_1$ and $c_2$. So, $\lambda_{1,2}(G') \geq 10$. Note that  $\lambda_{1,2}(G') = 10$ only if  $c_2 = c_1 + 1$, $c_1$ is minimum color ($c_1-1$ does not exists) or $c_2$ is maximum color ($c_2+1$ does not exists). If both $c_1$ and $c_2$ are non-extreme color, then  $\lambda_{1,2}(G') \geq 11$ and we are done. So, we consider $c_1=0$, $c_2=c_1+1=1$ and $c_2+1=2$ as unused in $G'$. 
In that case, $\mathbf{f}(d)=x \geq 3$ and hence $\vert \mathbf{f}(d)-\mathbf{f}(a)\vert \geq 3$. From observation \ref{vobs2_2}, if $x$ is reused in $G'_2$, then one of $x\pm 1$ can not be used in $G'_2$. If only $x$ is reused in $G'_2$, then $\lambda_{1,2}(G'_2) \geq 11$. If $x$ and one of $\{\mathbf{f}(i),\mathbf{f}(j)\}$ are reused in $G'_2$, then from Case 2.1 above, $\lambda_{1,2}(G_1) \geq 11$. If $x$ and both of $\{\mathbf{f}(i),\mathbf{f}(j)\}$ are reused in $G'_2$, from Case 3 below, we will see that $\lambda_{1,2}(G_1) \geq 11$. So, to keep $\lambda_{1,2}(G_1) < 11$, $x$ should not be reused in $G'_2$. In that case, $x-1$ must be used at one of $\{c,g,h,e \}$ in $G'$. Now arguing similarly as stated in case $\mathbf{1}$, we can conclude that $x+7$ must be used in $G'_1$ or $G'_2$. If $x=3$, then $x-1=2$ must be used in $G'$ which is a contradiction, as $2$ must remain unused in $G'$. Hence $x \geq 4$ implying $x+7=11$. Hence $\lambda_{1,2}(G_1) \geq 11$. 

\textbf{Case 3: The exists at least one sub graph of $L(T_4)$ isomorphic to $G$ where three pairs of peripheral vertices use a color each.}\\ 
Without loss of generality, let us consider $\mathbf{f}(a)=\mathbf{f}(l)=c_1$, $\mathbf{f}(b)=\mathbf{f}(k)=c_2$ and $\mathbf{f}(c)=\mathbf{f}(j)=c_3$. From observation \ref{vobs2_1}, $c_1 \pm 1$ and $c_2 \pm 1$ must be used in $\{b,k\}$ and $\{a,l\}$ respectively. It can be observed that $\lambda_{1,2}(G') = 9$ only if  $\vert c_1-c_2 \vert =1$, $\vert c_3-\mathbf{f}(g) \vert =1$, $\vert c_3-\mathbf{f}(f)\vert=1$ and any one of $\{c_1,c_2\}$ is one extreme color. Without loss of generality consider $\mathbf{f}(g)=c_3+1$, $\mathbf{f}(f)=c_3-1$, $c_1$ is minimum color and $c_2=c_1+1$. From observation~\ref{vobs2}, $c_3$ can only be reused in $\{p_2,q_2\}$. But $\mathbf{f}(p_2)\neq c_3$ and $\mathbf{f}(q_2)\neq c_3$ as $\mathbf{f}(f)=c_3-1$ and $d(f,p_2)=d(f,q_2)=2$. From observation \ref{vobs2_1}, if $\mathbf{f}(i)$ is to be reused in $G'_2$, then $\vert \mathbf{f}(i)-c_3 \vert=1$. But $\mathbf{f}(i) \neq c_3 \pm 1$ as $d(c,i)=2$ and $\mathbf{f}(c)=c_3$. From observation~\ref{vobs2}, $c_1$ can only be reused in $\{r_2,s_2\}$. But  $\mathbf{f}(r_2)\neq c_1$ and $\mathbf{f}(s_2)\neq c_1$ as $\mathbf{f}(b)=c_2=c_1+1$ and $d(b,r_2)=d(b,s_2)=2$. Now arguing similarly as stated in case $\mathbf{2.2}$ above, we can conclude that $c_2+1$ must remain  unused in $G'$. So, $(c_1 - \mathbf{f}(d)) \geq 3$. Now from observation~\ref{vobs2_2}, if $\mathbf{f}(d)$ is reused in $G'_2$ then any one of $\mathbf{f}(d) \pm 1$ must remain unused in $G'_2$.  Thus in $G'_2$, only $\mathbf{f}(d)$ can be reused by keeping one of $\mathbf{f}(d)\pm 1$ as unused. Hence $\lambda_{1,2}(G_1) \geq 11$.  If we consider $\lambda_{1,2}(G') = 10$, the same result can be obtained by considering the corresponding $G'_i$, $1 \leq i \leq 4$.

\textbf{Case 4: The exists at least one subgraph of $L(T_4)$ isomorphic to $G$ where all four pairs of peripheral vertices use a color each.}\\ Let us consider $\mathbf{f}(a)=\mathbf{f}(l)=c_1$, $\mathbf{f}(b)=\mathbf{f}(k)=c_2$, $\mathbf{f}(g)=\mathbf{f}(f)=c_3$ and $\mathbf{f}(c)=\mathbf{f}(j)=c_4$. From observation \ref{vobs2_1}, $c_1\pm 1$, $c_2 \pm 1$, $c_3\pm 1$ and $c_4\pm 1$ must be used in $\{b,k\}$, $\{a,l\}$, $\{c,j\}$ and $\{g,f\}$ respectively. It can be observed that $\lambda_{1,2}(G') = 9$ only if  $\vert c_1-c_2 \vert =1$, $\vert c_3 -c_4\vert =1$, one of $\{c_1,c_2\}$ is an extreme color and one of $\{c_3,c_4\}$ is the other extreme color. Without loss of generality, consider $c_1=0$, $c_4=9$, $c_2=c_1+1=1$ and $c_3=c_4-1=8$. So $c_2+1=2$ and $c_3-1=7$ are two distinct unused colors. Without loss of generality, consider $c_8=c_2+2$, $c_5=c_8+1$, $c_6=c_5+1$ and $c_7=c_6+1$. Since $\vert c_3 -c_4\vert =1$ and $d(g,p_4)=d(g,q_4)=2$, we get $\mathbf{f}(p_4) \neq c_4$ and $\mathbf{f}(q_4)\neq c_4$. Similarly, $\mathbf{f}(r_4) \neq c_1$ and $\mathbf{f}(s_4)\neq c_1$. From observation \ref{vobs2}, $c_5$ can only be reused at $\{s_4,r_4\}$ in $G'_4$ but $\mathbf{f}(s_4) \neq c_5$ and $\mathbf{f}(r_4)\neq c_5$ as $d(h,s_4)=d(h,r_4)=2$ and $\mathbf{f}(h)=c_8=c_5-1$. Therefore, only $c_7$ can be reused in $\{p_4,q_4\}$. From observation \ref{vobs2_2}, one of $c_7 \pm 1$ must remain unused in $G'_4$ as $(c_4 - c_7) = 3$. Hence $\lambda_{1,2}(G_1) \geq 11$. For other assignment of central vertices and for the case when $\lambda_{1,2}(G') = 10$, we can obtain the same result by considering the corresponding $G'_i$, $1 \leq i \leq 4$. 
\end{proof}

\subsection{Triangular Grid}
For any vertex $u$, the set of vertices which are adjacent to $u$ is called $N(u)$. Let us define $N(S)=\{\cup_{u\in S} N(u) : u \in S\}$. Let $v$ be any vertex in $T_6$. Consider the subgraph $G_v(V,E)$ of $T_6$ centering $v$ as shown in Figure \ref{Fig1}, where $V = N(v) \cup N(N(v))$ and $E$ is set of all the edges which are incident to $u$ where $u \in N(v)$. Observe that in $G_v$, for any two edges $e_1$ and $e_2$, $d(e_1,e_2)\leq 3$.  Now we define the following three sets of edges $S_1$, $S_2$ and $S_3$:

\begin{figure}[h!]
\centering  
\includegraphics[width=6cm]{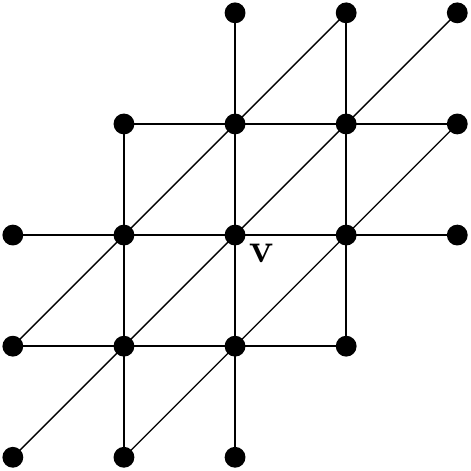}
\caption{ A subgraph $G_v$ of $T_6$}
\label{Fig1}
%\vspace{-0.5cm}
\end{figure}

\textit{$S_1$}: Edges of $G_v$ incident to $v$. 

\textit{$S_2$}: Edges of $G_v$ whose both end points incident to $e_1$ and $e_2$ where $e_1,\ e_2 \in S_1$.

\textit{$S_3$}: $E \setminus (S_1\cup S_2)$.

 Consider the $6$-cycle, $H_v$ formed with the edges of $S_2$ in $G_v$.  We say $e$ and $e_1$ as a pair of \textit{opposite edges} in $H_v$ iff $d(e,e_1)=3$. This implies that the same color can be used at a pair of opposite edges in $L(1,2)$-edge labeling. 
 An edge $e$ \textit{covers} the set of edges $E'$ if for every $e' \in E'$, $d(e,e')\leq 2$.
 This implies that a color used at $e$ can not be used at any edge $e' \in E'$ in $L(1,2)$-edge labeling. Now we have the following lemmas.

\begin{lemma}\label{lemma1}
If $c$ is a color used to color an edge $e$ in $S_1$, then $c$ can not be used in $E \setminus e$.
\end{lemma}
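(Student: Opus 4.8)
The plan is to establish the stronger geometric statement that $e$ \emph{covers} $E\setminus e$, i.e.\ that $d(e,e')\le 2$ for every $e'\in E$ with $e'\ne e$. As recorded in the paragraph preceding the lemma, covering immediately implies that a color used at $e$ cannot be reused at any such $e'$ in an $L(1,2)$-edge labeling, which is exactly the assertion.

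Write $e=vv_1$ with $v_1\in N(v)$. The structural fact I would use is built into the definition of $G_v$: every edge of $E$ is incident to at least one vertex of $N(v)$, and $v$ — one endpoint of $e$ — is either equal or adjacent in $T_6$ to every vertex of $N(v)$. Concretely, fix $e'\in E\setminus e$ and pick a vertex $u\in N(v)$ incident to $e'$. If $e'$ is incident to $v$, then $e$ and $e'$ share the vertex $v$, so $d(e,e')=1$. If $u=v_1$, then $e$ and $e'$ share $v_1$, so again $d(e,e')=1$. In the remaining case $u\in N(v)\setminus\{v_1\}$; then $u\neq v$, the edge $vu$ lies in $S_1$, it is incident to $v\in e$ and to $u\in e'$, and it is distinct from both $e$ and $e'$, so $e$ and $e'$ are joined by a single edge and $d(e,e')\le 2$. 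Hence $d(e,e')\le 2$ in all cases, $e$ covers $E\setminus e$, and the lemma follows.

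There is no genuinely hard step here: the lemma merely expresses that an edge meeting the centre $v$ is within distance two of the whole of $G_v$, precisely because $G_v$ was defined so that each of its edges meets the closed neighbourhood of $v$. The only points deserving care are to invoke the paper's distance convention correctly — namely that $d(e_1,e_2)\le 2$ holds as soon as $e_1$ and $e_2$ have endpoints that coincide or are adjacent in $T_6$ — and to check that the connecting edge $vu$ is distinct from $e'$ in the last case, which holds since there $e'$ is assumed not incident to $v$ while $vu$ is.
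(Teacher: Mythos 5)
Your proof is correct and follows the same route as the paper: the paper's proof is the one-line observation that since $e$ is incident to the centre $v$, every other edge of $E$ is within distance two of $e$, so the color cannot be reused. You simply spell out the case analysis behind that distance claim (via the fact that every edge of $G_v$ meets $N(v)$ and the connecting $S_1$-edge $vu$), which is a faithful elaboration rather than a different argument.
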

\begin{proof}
Since $e$ is incident to $v$,  for any other edge $e_1 \in E$, $d(e,e_1) \leq 2$. Hence $f'(e_1) \neq c$ for $L(1,2)$-edge labeling, where $f'(e_1)$ denotes the color of $e_1$. \end{proof}

\begin{lemma}\label{lemma2}
If $c$ is a color used to color an edge in $S_1$, then $c + 1$ and $c-1$ both can be used at most once in $G_v$.
\end{lemma}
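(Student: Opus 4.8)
The plan is to turn this into a short distance argument built on the observation already used to prove Lemma~\ref{lemma1}: when $e\in S_1$ (so $e$ is incident to $v$), every other edge $e_1\in E$ satisfies $d(e,e_1)\le 2$, because $e_1$ either shares an endpoint with $e$ or is joined to $e$ by the spoke from $v$ to the endpoint of $e_1$ lying in $N(v)$. In particular, no edge of $G_v$ is at distance $3$ from an edge of $S_1$.

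First I would locate the edges that could possibly receive colour $c\pm 1$. Let $e\in S_1$ have colour $c$ and suppose some edge $e'\neq e$ has colour $c+1$ (the argument for $c-1$ is identical, and if $c+1$ exceeds the span there is nothing to prove). Since $|f'(e)-f'(e')|=1<2$, the $L(1,2)$ rule forbids $d(e,e')=2$, and by the remark above $d(e,e')\neq 3$; hence $d(e,e')=1$, i.e.\ $e'$ shares an endpoint with $e$. Now suppose two distinct edges $e'$ and $e''$ both received colour $c+1$. Two edges of the same colour must be at mutual distance at least $3$ (at distance $1$ or $2$ the labelling would need a nonzero difference), hence exactly $3$ in $G_v$. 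But writing $e=xy$, each of $e'$ and $e''$ contains $x$ or $y$: if they contain the same endpoint of $e$ they are adjacent, so $d(e',e'')=1$; if they contain different endpoints then $e$ itself meets both, so $d(e',e'')\le 2$. In either case $d(e',e'')\le 2$, contradicting $d(e',e'')=3$. Hence colour $c+1$ is used on at most one edge of $G_v$, and symmetrically so is $c-1$.

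I do not anticipate a genuine obstacle; the only care needed is to ensure that the connecting edges invoked (the spokes out of $v$, and $e$ itself) all lie in $E$, so that the distance bounds hold inside $G_v$ and not merely in the ambient grid $T_6$ — this is exactly the fact the proof of Lemma~\ref{lemma1} already relies on. It is also worth noting explicitly that if $c$ is the maximum (respectively minimum) colour used in the labelling, then $c+1$ (respectively $c-1$) occurs zero times, which still conforms to the claim.
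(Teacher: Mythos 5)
Your proof is correct and follows essentially the same route as the paper: you show that any edge coloured $c\pm 1$ must be adjacent to $e$ (the paper's set $S_e$), and that any two edges adjacent to $e$ are at mutual distance at most $2$, so the colour cannot repeat. Your version merely spells out in more detail why the distance from $e$ must be exactly $1$ and why two neighbours of $e$ are within distance $2$, which the paper leaves implicit.
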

\begin{proof}
Let $e$ be an edge in $S_1$ such that $f'(e)=c$. Since $e$ is incident to $v$, for any other edge $e_1 \in E$, $d(e,e_1) \leq 2$. 
Let $S_e = \{e_1: d(e,e_1)=1\}$. For $L(1,2)$-edge labeling, $c + 1$ can only be used in an edge $e_1$ in $S_e$. It can be noted that for any two edges $e_1, e_2 \in S_e$, $d(e_1,e_2)\leq 2$. Hence $c + 1$ can be used at most once. Proof for $c-1$ can be done in similar manner. \end{proof}

\begin{lemma}\label{lemma3}
If $c$ is a color used to color an edge $e$ in $S_2$, then $c$ can be used at most one edge in $E \setminus e$ in $G_v$.
\end{lemma}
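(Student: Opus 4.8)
The plan is to mimic the structure of the proofs of Lemmas~\ref{lemma1} and~\ref{lemma2}, but now working with an edge $e \in S_2$ rather than $e \in S_1$. The key geometric fact to establish is this: if $e \in S_2$ and $c = f'(e)$, then the set $E_{\ge 3}(e) = \{ e' \in E : d(e,e') \ge 3 \}$ of edges in $G_v$ that could possibly reuse the color $c$ consists of at most one edge. Equivalently, I want to show that $e$ \emph{covers} (in the terminology just introduced) all of $E \setminus e$ except for at most a single edge $e^\star$, namely the opposite edge of $e$ in the $6$-cycle $H_v$ together with possibly one edge of $S_3$ that sits "far" from $e$.

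First I would fix $e \in S_2$; by the rotational symmetry of $G_v$ (it is invariant under the $6$-fold rotation about $v$, and each edge of $H_v = S_2$ is in one orbit), it suffices to check this for one representative edge of $S_2$. Then I would enumerate, by inspecting Figure~\ref{Fig1}, all edges $e'$ with $d(e,e') \le 2$: every edge of $S_1$ is within distance $2$ of $e$ because $e$ shares a vertex with two edges of $S_1$ (its two endpoints each lie on an $S_1$ edge), so any $S_1$ edge is at distance at most $2$ from $e$ via one of these; every other edge of $S_2$ is within distance $2$ of $e$ except its opposite edge in $H_v$ (this is exactly the definition of opposite edges, $d = 3$); and for $S_3$, each $S_3$ edge is incident to a vertex of $H_v$, so I would trace the short paths and verify that all $S_3$ edges lie within distance $2$ of $e$ except at most one. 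Collecting the exceptions, at most one edge of $E \setminus e$ fails to be covered by $e$, so $c$ can be reused on at most that one edge, which is precisely the claim.

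The main obstacle, and the only genuinely non-routine part, is the $S_3$ bookkeeping: $S_3$ is the largest of the three edge classes and its edges hang off the outer boundary, so one must be careful that for a fixed $e \in S_2$ exactly one (and not two or zero) of these outer edges escapes to distance $3$. I expect this to come down to the opposite edge $\bar e$ of $e$ in $H_v$ being the unique witness — $d(e,\bar e) = 3$ directly, and every $S_3$ edge turns out to be at distance $\le 2$ from $e$ — so in fact the "at most one" in the statement is realised by $\bar e$ alone. I would present the distance computations as a short case check keyed to the labelled figure rather than spelling out every path, and then conclude: since no edge of $S_1 \cup (S_3)$ and no edge of $S_2 \setminus \{\bar e\}$ can receive color $c$, the color $c$ is used at most at one edge of $E \setminus e$, proving the lemma. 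As a sanity check against the framework set up just before the lemma, this is consistent with the stated principle that a color used at an edge $e$ cannot appear on any edge that $e$ covers, together with the observation that a color may be shared by a pair of opposite edges of $H_v$.
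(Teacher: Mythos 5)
Your central geometric claim is false, and this is a genuine gap rather than a presentational issue. You assert that an edge $e\in S_2$ covers all of $E\setminus e$ except (at most) its opposite edge $\bar e$ in $H_v$, i.e.\ that every $S_3$ edge lies within distance $2$ of $e$. It does not. Write $e=(n_1,n_2)$ where $n_1,\dots,n_6$ are the neighbours of $v$ in cyclic order, so $\bar e=(n_4,n_5)$. Each of the six $S_3$ edges incident to $n_4$ or $n_5$ has its inner endpoint non-adjacent to both $n_1$ and $n_2$ (in $T_6$ the only neighbours of $n_4$ among $N(v)$ are $n_3$ and $n_5$), and its outer endpoint is even farther away; hence all six of these edges are at distance $3$ from $e$. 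So the set of edges not covered by $e$ has seven elements, not one, and your argument "only $\bar e$ can receive $c$, hence at most one reuse" collapses. The paper itself concedes this when it says $c$ may be reused either at the opposite edge $e_1$ \emph{or} at an edge of $S_3$ adjacent to $e_1$.

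The missing idea is a second covering step: one must show that although seven edges escape to distance $3$ from $e$, any two of them are within distance $2$ of \emph{each other}, so at most one of them can actually take the colour $c$. Concretely, every escaping edge is either $\bar e=(n_4,n_5)$ itself or is incident to $n_4$ or $n_5$; any two such edges either share a vertex or are joined by the single edge $(n_4,n_5)$, so they are mutually at distance at most $2$. Equivalently (and this is how the paper phrases it), once $c$ is placed on $e$ and on any one escaping edge $e_2$, the pair $\{e,e_2\}$ covers all of $G_v$. Note also that your claim would sit uneasily with Lemma~\ref{lemma4}, whose proof explicitly uses the fact that the obstruction is "no two of the escaping edges are mutually at distance $3$," not "there is only one escaping edge." Your overall framework (classify which edges $e$ fails to cover, then bound the reuse) is the right one; you just stopped one step short and, in doing so, recorded an incorrect distance computation for the outer $S_3$ edges.
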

\begin{proof}
Note that $c$ can not be used at any edge in $S_1$. Here $c$ can be used at the \textit{opposite edge} $e_1$ of $e$ in $S_2$ or at an edge $e_2$ in $S_3$, which is adjacent to $e_1$.
When $c$ is used at $e$ and $e_1$, then $c$ can not be used again in $G_v$ as $e$ and $e_1$ together cover all the edges of $G_v$. When $c$ is used  at $e$ and $e_2$, $c$ can not be used again in $G_v$  as $e$ and $e_2$ together also cover all the edges of $G_v$. \end{proof}

\begin{lemma}\label{lemma4}
If $c$ is a color used to color an edge $e$ in $S_2$, then $c+1$ and $c-1$ both can be used at most twice in $G_v$.
\end{lemma}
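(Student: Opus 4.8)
The plan is to run the same kind of local counting argument used for Lemma~\ref{lemma2}, adjusted to the fact that an edge of $S_2$, unlike an edge of $S_1$, is not incident to $v$ and therefore does have edges of $G_v$ at distance $3$ from it. Fix $e\in S_2$ with $f'(e)=c$. The first step is the observation that, in an $L(1,2)$-edge labeling, every edge at distance exactly $2$ from $e$ must receive a color differing from $c$ by at least $2$; hence $c+1$ (and symmetrically $c-1$) can appear only on edges that are at distance $1$ from $e$ or at distance at least $3$ from $e$, and since all distances in $G_v$ are at most $3$, the latter means distance exactly $3$.

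Next I would bound the occurrences of $c+1$ in these two families separately, using the elementary fact that any two edges incident to a common edge are at distance at most $2$, so that two such edges cannot share the color $c+1$. Every edge at distance $1$ from $e$ shares an endpoint with $e$, so this whole family consists of edges incident to the single edge $e$; hence at most one of its members is colored $c+1$. For the edges at distance $3$ from $e$, I would reuse the description already invoked in the proof of Lemma~\ref{lemma3}: they are exactly the edge $e_1\in S_2$ opposite to $e$ in the $6$-cycle $H_v$, together with the edges of $S_3$ adjacent to $e_1$; all of these are incident to the single edge $e_1$, so again at most one of them is colored $c+1$.

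Adding the two bounds shows that $c+1$ is used at most twice in $G_v$, and the identical argument with $c-1$ in place of $c+1$ finishes the proof. The only step requiring genuine care is the claim that the edges at distance $3$ from $e$ are precisely $e_1$ and the edges of $S_3$ incident to it --- equivalently, that every edge of $G_v$ not of this form either meets $e$ or is joined to $e$ by a single edge. I expect this to be the main obstacle, though only a mild one: it is settled by the same finite case check on $G_v$ that underlies Lemma~\ref{lemma3}, going through the edges of $S_1$, $S_2$ and $S_3$ one type at a time and exhibiting in each case a vertex or a connecting edge witnessing distance at most $2$ to $e$.
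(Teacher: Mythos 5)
Your proof is correct and follows essentially the same route as the paper's: both split the possible occurrences of $c\pm 1$ into edges adjacent to $e$ (at most one, since they all meet $e$) and edges at distance $3$ from $e$ (at most one), giving the bound of two. The only difference is cosmetic --- the paper justifies the distance-$3$ bound by appealing to Lemma~\ref{lemma3} (two such edges mutually at distance $3$ would let $c$ be used thrice), whereas you justify it by explicitly noting that all edges at distance $3$ from $e$ meet the opposite edge $e_1$; both rest on the same structural fact about $G_v$.
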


\begin{proof}
Suppose $e_1$ be an edge colored with $c + 1$. If $e_1$ is not adjacent to $e$ then $d(e_1, e)=3$. From statement of lemma~\ref{lemma3}, it follows that there does not exist two edges along with $e$ in $G_v$ which are mutually distance $3$ apart, otherwise $c$ would have been used for three times. Hence $c+1$ can be used at most once.

When $e_1$ is adjacent to $e$, $e_2$ can be colored with $c+1$ if $e_2$ is at distance $3$ apart from both $e_1$ and $e$. Again from the statement of lemma~\ref{lemma3}, it follows that there does not exist two edges along with $e$ in $G_v$ which are mutually distance $3$ apart, otherwise $c$ would have been used for three times. So, $c+1$ can be used at most twice, one in one of the edges adjacent to $e$ and other in one of the edges which are at distance $3$ apart from $e$. Proof for $c-1$ can be done in similar manner.
\end{proof}

\begin{lemma}\label{lemma5}
If $c$ is a color used to color an edge $e$ in $S_3$, then $c$ can be used at most twice in $E \setminus e$.
\end{lemma}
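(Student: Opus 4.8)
The plan is to follow the pattern of the preceding lemmas: first pin down exactly which edges of $G_v$ could legitimately carry the same color $c$ as $e$, and then bound how many of them can do so simultaneously. Label the vertices of $N(v)$ cyclically as $u_1,\dots,u_6$, so that $H_v=u_1u_2\cdots u_6u_1$, and note that every edge of $S_3$ joins some $u_i$ to a vertex of $N(N(v))\setminus N(v)$. Up to the dihedral symmetry of $G_v$ that fixes $v$, there are only two kinds of edge in $S_3$: a ``radial'' edge pointing directly away from $v$, and a side edge of one of the six outer triangles sitting on the edges of $H_v$. I would therefore assume $e$ is incident to $u_1$ and deal with these two kinds.

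The first step is to determine $D:=\{e'\in E : d(e,e')\ge 3\}$. Because every two edges of $G_v$ are at distance at most $3$, $D$ is exactly the set of edges on which $c$ could be reused, and any two edges of $D$ carrying $c$ must be at distance $3$ from each other. A check of the local picture around $e$ (Figure~\ref{Fig1}) shows that every edge of $E$ incident to $v$, to $u_1$, or to the two $H_v$-neighbors $u_2,u_6$ of $u_1$ is within distance $2$ of $e$; hence $D$ is precisely the set of edges of $E$ incident to one of the three consecutive cycle vertices $u_3,u_4,u_5$ and to none of $v,u_1,u_2,u_6$. Explicitly, $|D|=11$: the two $H_v$-edges $u_3u_4$ and $u_4u_5$, together with the nine edges of $S_3$ incident to $u_3$, $u_4$ or $u_5$. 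I expect the radial and the side cases to give the same set $D$, so no further branching should be needed here.

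The second step is a short combinatorial bound on the size of a subset of $D$ whose edges are pairwise at distance $3$. Two edges incident to a common vertex $u_j$ are at distance $1$; and for consecutive vertices $u_j,u_{j+1}$ of $H_v$, an edge incident to $u_j$ and an edge incident to $u_{j+1}$ are at distance at most $2$, since they either share a vertex or are both adjacent to the $H_v$-edge $u_ju_{j+1}$, and $d$ obeys the triangle inequality. As $u_3,u_4,u_5$ form a path, a pairwise-distance-$3$ subset of $D$ cannot contain any edge incident to $u_4$ (such an edge lies within distance $2$ of every other edge of $D$), and among the remaining edges it can contain at most one incident to $u_3$ and at most one incident to $u_5$. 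Hence such a subset has at most two edges, so $c$ is used on at most two edges of $E\setminus e$, which is the claim; moreover the bound is tight, since the two radial edges at $u_3$ and at $u_5$ are both at distance $3$ from $e$ and from each other.

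The step I expect to be the real work is the first one: verifying that $D$ is exactly the set of edges at $u_3,u_4,u_5$, and in particular that the radial and the side type of $e$ behave identically. This is only a routine distance computation on the neighborhood of $e$, entirely in the spirit of the ones already done for Lemmas~\ref{lemma3} and~\ref{lemma4}, but it is where all the case analysis sits.
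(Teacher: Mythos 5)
Your proof is correct and takes essentially the same route as the paper's: both arguments reduce the question to which vertices of the hexagon $H_v$ can carry an edge colored $c$, and observe that any two such vertices must be non-consecutive on $H_v$ (since edges at consecutive cycle vertices are within distance $2$ via the connecting $H_v$-edge), so that besides the vertex carrying $e$ only two further alternating vertices remain. Your version is just more explicit, computing the candidate set $D$ of eleven edges and covering the two $S_2$-edges $u_3u_4$ and $u_4u_5$ that the paper's terser argument glosses over.
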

\begin{proof}
It follows from Figure~\ref{Fig1} that exactly one end point of $e$ is incident to a vertex in $H_v$. Note that for any walk through $H_v$, every third vertex is distance $2$ apart. So edges incident to those vertices are distance $3$ apart. Since the order of $H_v$ is $6$, there can be at most $6/2=3$ vertices which are mutually distance $2$ apart. Hence $c$ can be used thrice.  
\end{proof}

\begin{lemma}\label{lemma6}
If $c$ is a color used to color an edge $e$ in $S_3$, then $c + 1$ and $c-1$ both can be used at most thrice in $G_v$.
\end{lemma}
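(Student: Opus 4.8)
The plan is to mirror the proof of Lemma~\ref{lemma4} (the analogous statement for edges of $S_2$), with Lemma~\ref{lemma5} playing the role that Lemma~\ref{lemma3} played there. The starting observation is that in any $L(1,2)$-edge labeling two edges receiving the \emph{same} color must be at distance at least $3$; since every pair of edges of $G_v$ is at distance at most $3$ (see Figure~\ref{Fig1}), any two edges colored $c+1$ are in fact at distance exactly $3$ in $G_v$. Thus the set $A$ of edges colored $c+1$ is a set of edges that are pairwise at distance $3$, and the whole argument reduces to bounding $|A|$.

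Next I would split $A$ according to distance from $e$. If $e_1\in A$, then $|f'(e)-f'(e_1)|=1$ forces $d(e,e_1)\neq 2$, hence $d(e,e_1)\in\{1,3\}$. Moreover at most one edge of $A$ can be adjacent to $e$: if $e_1,e_2$ were both adjacent to $e$, then $e_1,e,e_2$ shows $d(e_1,e_2)\le 2$, contradicting that the edges of $A$ are pairwise at distance $3$. So $A$ has at most one edge adjacent to $e$, and every other edge of $A$ is at distance exactly $3$ from $e$.

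Finally, let $A'$ be $A$ with the (at most one) edge adjacent to $e$ deleted; then every edge of $A'$ is at distance $3$ from $e$, and the edges of $A'$ are pairwise at distance $3$, so $A'\cup\{e\}$ is a set of pairwise-distance-$3$ edges of $G_v$ containing the edge $e\in S_3$. If this set had size at least $4$ we could color all its edges $c$, using $c$ at least thrice in $E\setminus e$, contradicting Lemma~\ref{lemma5}; hence $|A'|\le 2$. Combining, $|A|\le 1+2=3$, i.e.\ $c+1$ is used at most thrice in $G_v$, and the argument for $c-1$ is identical.

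The step needing the most care is pinning down the distance structure of $A$: the reductions ``$d(e,e_1)\in\{1,3\}$'' and ``at most one edge of $A$ is adjacent to $e$'' both hinge on the diameter-$3$ property of $G_v$ together with the $L(1,2)$ constraints, and I would double-check against Figure~\ref{Fig1} that there is no geometric configuration letting an extra neighbor-color edge slip in (for instance an edge of $S_3$ sitting at distance $3$ from two independent edges adjacent to $e$). Once the pairwise-distance-$3$ structure of $A$ is established, the bound falls out of Lemma~\ref{lemma5} in exactly the way the bound in Lemma~\ref{lemma4} fell out of Lemma~\ref{lemma3}.
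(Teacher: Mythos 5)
Your proof is correct and follows essentially the same route as the paper's: both arguments reduce the count of edges colored $c+1$ to the bound of Lemma~\ref{lemma5} on pairwise-distance-$3$ sets containing an edge of $S_3$, exactly as Lemma~\ref{lemma4} is derived from Lemma~\ref{lemma3}. Your case split (at most one edge of $A$ adjacent to $e$, at most two at distance $3$ from $e$) is in fact a more carefully justified version of the paper's rather terse argument, so no changes are needed.
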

\begin{proof}
We know that $c+1$ can be used at an edge adjacent to $e$. From lemma~\ref{lemma5} it is clear that $c$ can be used at most thrice. So, $c+1$  can also be used at most thrice, where each such edge is adjacent to one of the three edges colored with $c$. It can be proved similarly for $c-1$.
\end{proof}

\begin{lemma}\label{lemma7}
\begin{enumerate}
\item[i.] To color the edges of $S_1$, at least $6$ colors are required.

\item[ii.] To color the edges of $S_2$, at least $3$ colors are required.

\item[iii.] To color the edges of $S_3$, at least $6$ colors are required.
\end{enumerate}
\end{lemma}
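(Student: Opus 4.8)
The plan is to prove each of the three lower bounds by examining the mutual-distance structure of the relevant edge set inside $G_v$ and counting how many edges must receive pairwise-distinct (or nearly distinct) colors. I would treat the three parts independently, since each concerns a different set of edges with a different intersection pattern.

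For part (i): the edges of $S_1$ are exactly the edges incident to $v$, so they all pairwise have distance $1$. In the triangular grid $T_6$, $v$ has degree $6$, hence $|S_1| = 6$, and by Lemma~\ref{lemma1} (or simply because all six edges pairwise share the vertex $v$) all six must receive distinct colors, so at least $6$ colors are needed. For part (ii): $H_v$ is the $6$-cycle formed by the edges of $S_2$. I would argue that within a $6$-cycle, each edge is at distance $\le 2$ from four of the other five edges and at distance $3$ from exactly one (its opposite edge, in the terminology introduced just before the lemmas). Thus the ``same-color'' relation partitions the six edges into three opposite pairs, and two edges in different pairs are at distance $\le 2$, so they need different colors; consequently at least $3$ distinct colors are required, and $3$ suffices only if each color is used on exactly one opposite pair. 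For part (iii): $|S_3| = 6$ as well (the six edges sticking out of $H_v$, one at every second vertex — actually one per vertex of $H_v$; I would read off the exact count from Figure~\ref{Fig1}), and I would show that these edges are pairwise at distance $\le 2$ except for the opposite-type pairs described in Lemma~\ref{lemma5}. Since by Lemma~\ref{lemma5} a color on an $S_3$-edge can be reused at most twice more, the best case is that the colors on $S_3$ split into groups of three mutually-distance-$2$ edges; if $|S_3|$ is such that it needs at least $6$ colors to cover it consistently with the distance-$\le 2$ constraints, we are done. More directly: I would identify within $S_3$ a set of edges that are pairwise at distance $1$ or $2$ and are forced to take $6$ distinct colors, or else invoke a direct adjacency/incidence count.

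The cleanest route, and the one I would actually write, is to make the distance bookkeeping explicit: label the six edges of $S_1$ as $a_1,\dots,a_6$ around $v$, the six edges of $H_v$ as $b_1,\dots,b_6$ in cyclic order (so $b_i$ and $b_{i+3}$ are opposite), and the six outer edges of $S_3$ as $c_1,\dots,c_6$, then for each part tabulate which pairs are at distance $3$ (the only pairs that may share a color) and conclude that the ``conflict graph'' restricted to each $S_j$ has independence number $1$, $2$, $3$ respectively, forcing chromatic numbers $6$, $3$, $6$. For $S_2$ the independence-$2$ claim is immediate from the $6$-cycle structure; for $S_3$ the independence-$3$ claim is exactly Lemma~\ref{lemma5}'s content (at most three edges mutually at distance $2$), so $6$ edges need $\lceil 6/3 \rceil \cdot$ something — but $\lceil 6/3\rceil = 2$ is not $6$, so a bare counting bound is insufficient and one genuinely needs the finer fact that the $S_3$-edges that are mutually distance-$\le 2$ form a structure whose complement forces six colors.

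The main obstacle I anticipate is precisely part (iii): showing $6$ colors are needed for $S_3$ is not a one-line counting argument, because an optimal reuse pattern on the $6$-cycle $H_v$'s outer edges could, a priori, use as few as two colors if those edges split into two triples of mutually-distance-$2$ edges. So I would need to verify from Figure~\ref{Fig1} that no such clean split exists — i.e. that among the outer edges there are in fact (at least) six edges that are pairwise at distance $\le 2$, or that any valid labeling of the outer edges uses six colors because of interaction with how distance-$3$ is realized in the grid. Concretely I expect the correct statement to be that the edges of $S_3$ contain a clique of size $6$ in the ``distance-$\le 2$'' graph, which would finish it immediately; checking this by going around $H_v$ and noting that consecutive outer edges are at distance $2$ and there is no pair of outer edges at distance $3$ is the step that needs care, and is where I would spend the bulk of the effort.
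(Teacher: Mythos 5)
Parts (i) and (ii) of your proposal are correct and match the paper's proof: the six edges of $S_1$ pairwise share the vertex $v$, so Lemma~\ref{lemma1} (or the distance-$1$ constraint directly) forces six distinct colors; and in the $6$-cycle $H_v$ the only distance-$3$ pairs are the three opposite pairs, so at least three colors are needed for $S_2$.

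Part (iii), however, is built on a miscount, and this is a genuine gap. You take $|S_3|=6$ (``the six edges sticking out of $H_v$''), but $S_3 = E \setminus (S_1 \cup S_2)$ contains $18$ edges: each of the six vertices $u \in N(v)$ has degree $6$ in $T_6$, of which one edge goes to $v$ (in $S_1$) and two go to the adjacent neighbours of $v$ (in $S_2$), leaving three outward edges per vertex, i.e.\ $6 \times 3 = 18$ edges in $S_3$. (This count is also what the later proofs rely on, e.g.\ Theorem~\ref{th1} speaks of ``at least $9$ edges are left to be colored in $S_3$'' after $9$ have been colored.) With the correct count, part (iii) is exactly the ``bare counting bound'' you dismiss as insufficient: by Lemma~\ref{lemma5} a color used on an edge of $S_3$ can appear at most twice more in $E$, hence at most three times in $S_3$, so $\lceil 18/3 \rceil = 6$ colors are required. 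The elaborate machinery you anticipate needing --- exhibiting a size-$6$ clique in the distance-$\le 2$ graph on the outer edges, or analysing how triples of mutually distance-$3$ edges can or cannot partition $S_3$ --- is unnecessary and, as you yourself observe, would not follow from your (incorrect) premise that $S_3$ has only six edges. Once you fix the count, your own framework (independence number $3$ in the conflict graph on $S_3$, divided into $18$ edges) gives the bound immediately.
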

\begin{proof}
i. From lemma~\ref{lemma1}, every edge of $S_1$ has an unique color. As there are $6$ edges in $S_1$, $6$ distinct colors are required here.

ii. In $S_2$, there are $3$ pairs of {\it opposite edges}. Each pair of opposite edges requires at least one unique color. So at least $3$ colors are required.

iii. A color can be used thrice in $S_3$ by lemma~\ref{lemma5}. In $S_3$, there are $18$ edges. So, at least $6$ colors are required.

\end{proof}
\begin{theorem}\label{th3}
For any optimal labeling of $G_v$, $6$ consecutive colors including either the minimum color or the maximum color must be used in $S_1$.
\end{theorem}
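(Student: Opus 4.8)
The plan is to combine the per‑colour usage bounds of Lemmas~\ref{lemma1}--\ref{lemma7} with a single global incidence count and a small combinatorial fact about $6$‑element subsets of $\{0,\dots,t\}$. First I would fix notation: by Lemma~\ref{lemma1} the six edges of $S_1$ carry pairwise distinct colours $a_1<a_2<\dots<a_6$, and I let $t$ be the span of the given optimal labelling, so all colours lie in $\{0,1,\dots,t\}$ and the goal is $a_6-a_1=5$ together with $a_1=0$ or $a_6=t$. The structural observation I need up front is that \emph{every} edge of $G_v$ is at distance at most $2$ from \emph{every} edge of $S_1$: an edge not in $S_1$ is incident to some $u_i\in N(v)$, hence shares $u_i$ with $vu_i$ and reaches any other $vu_j$ through the path $u_i\,v\,u_j$. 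So a non‑$S_1$ edge $e$ incident to $u_i$ has $f'(e)\neq a_i$ and $|f'(e)-a_j|\geq 2$ for every $j\neq i$; thus $f'(e)$ avoids the band $B(A):=\{a_j-1,a_j,a_j+1:1\leq j\leq 6\}$ except possibly for the value(s) $a_i\pm 1$ of the $S_1$‑edge(s) it touches, and by Lemma~\ref{lemma2} each such value can occur only once in all of $G_v$.

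Next I would run the count. Partition $\{0,\dots,t\}$ into $A=\{a_1,\dots,a_6\}$, the buffer set $N:=B(A)\setminus A$, and the remainder $R$. By Lemma~\ref{lemma1} each colour of $A$ is used exactly once; by Lemma~\ref{lemma2} each colour of $N$ at most once; and by Lemmas~\ref{lemma3} and \ref{lemma5} each colour of $R$ at most three times (a colour off $A$ occurs only on $S_2\cup S_3$; if it meets $S_2$ it is used at most twice, otherwise at most three times). Since $|E(G_v)|=|S_1|+|S_2|+|S_3|=6+6+18=30$, summing incidences gives $6+|N|+3|R|\geq 30$, and with $|R|=t+1-6-|N|$ this rearranges to $t\geq 13+\tfrac23|N|$. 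I would then sharpen this with Lemma~\ref{lemma7}(ii)--(iii): the $24$ edges of $S_2\cup S_3$ must be covered by $N\cup R$, $S_2$ needs at least three colours and $S_3$ at least six, and a colour shared between $S_2$ and $S_3$ is used at most twice (Lemma~\ref{lemma3}); after subtracting the at most $|N|$ edges a buffer colour can absorb, $R$ is forced to supply essentially nine distinct colours for $S_2\cup S_3$, so that a labelling with $|N|\geq 2$ needs strictly more colours than one with $|N|=1$.

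The combinatorial core is then elementary: for any $6$‑element $A\subseteq\{0,\dots,t\}$ with $t\geq 13$ one has $|N|\geq 1$, with equality precisely when $A$ is a block of six consecutive integers one endpoint of which is $0$ or $t$ --- a single run not touching $0$ or $t$ has $|N|=2$, a single run touching one of them has $|N|=1$ (it cannot touch both since $t\geq 6$), and as soon as $A$ has a gap each internal gap donates a buffer colour, so $|N|\geq 2$. Feeding this into the sharpened count, an optimal labelling must realise the minimum value $|N|=1$, hence its $S_1$‑colours are six consecutive integers containing $0$ or $t$, which is exactly the assertion of the theorem.

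I expect the sharpening in the second paragraph to be the real obstacle. The plain count only gives $t\geq 13+\tfrac23|N|$, which separates $|N|=1$ from $|N|=2$ by a single unit, and the freedom a non‑extreme block $A$ gains --- colours both below and above $A$ become available to $S_2\cup S_3$ --- nearly compensates for the extra buffer colour. Closing that last gap requires a careful accounting of exactly how many $S_2\cup S_3$ edges a buffer colour (or a colour shared between $S_2$ and $S_3$) can legally carry, together with the fact that two colour classes whose colours differ by $1$ cannot contain a pair of edges at distance exactly $2$; the bookkeeping is heaviest for the gapped configurations and for the asymmetry introduced by $A$ meeting $0$ versus $t$. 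An equivalent route I would keep in reserve is to exhibit one explicit optimal labelling in which $S_1$ already uses six consecutive colours including an extreme, and then argue that every configuration with $|N|\geq 2$ is forced to use at least one more colour.
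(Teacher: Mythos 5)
The decisive step is missing. Your global count gives only $6+|N|+3|R|\geq 30$, i.e. $t\geq 13+\tfrac{2}{3}|N|$, and this would exclude $|N|\geq 2$ only if the optimal span of $G_v$ were as small as $14$. But the paper itself shows $\lambda'_{1,2}(G_v)\geq 16$ (and for $T_6$ the span lies between $18$ and $20$), so at the true optimum your counting inequality is satisfied with ample slack for $|N|=2,3,\dots$ as well; the crude capacity bound therefore places no constraint whatsoever on how the $S_1$-colours sit inside the span of an optimal labelling. Everything the theorem asserts is concentrated in your ``sharpening'' paragraph, and there you only restate the desired conclusion (``$R$ is forced to supply essentially nine distinct colours\dots so that a labelling with $|N|\geq 2$ needs strictly more colours than one with $|N|=1$'') without deriving it. To derive it you would need the finer per-colour restrictions --- colours meeting $S_2$ capped at two uses (Lemma~\ref{lemma3}), colours adjacent to an $S_2$-colour capped at two (Lemma~\ref{lemma4}), the prohibition on three consecutive colours each used thrice in $S_3$ (Lemma~\ref{lemma9a}) --- together with a case analysis over how many colours $S_2$ receives, i.e. essentially the bookkeeping the paper carries out in Theorem~\ref{th1} and Lemma~\ref{lemma9}, done separately for $|N|=1$ and $|N|\geq 2$ so that the two resulting spans can actually be compared. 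As written, that comparison is asserted, not proved.

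The fallback you keep in reserve does not close the gap either: exhibiting one optimal labelling whose $S_1$-colours form a consecutive block containing an extreme says nothing about \emph{every} optimal labelling, which is what the theorem claims; the accompanying claim that ``every configuration with $|N|\geq 2$ is forced to use at least one more colour'' is again exactly the unproved comparison; and certifying any explicit labelling as optimal would require knowing the exact optimal span of $G_v$, which the paper never determines. For context, the paper's own argument is the same minimisation idea in informal form (consecutive $S_1$-colours touching an extreme minimise the number of colours usable only once), so your buffer-set $N$ is a sensible formalisation of it --- but the quantitative step turning ``minimise $|N|$'' into ``any optimal labelling has $|N|=1$'' is precisely what is absent from your write-up.
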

\begin{proof}
It is clear from lemma~\ref{lemma7}.i that $S_1$ needs at least $6$ colors to color its edges. From lemma~\ref{lemma2}, note that if $c$ is a color used in an edge of $S_1$ then both $c+1$ and $c-1$ can be used at most once in $G_v$. Whereas a color can be used twice in $S_2$ and thrice in $S_3$. Thus our aim should be to minimize the number of colors which can be used only once in $G_v$. This implies that consecutive colors should be used in $S_1$ for optimal coloring. If the minimum color ($min$) or the maximum color ($max$) is used in $S_1$ then further benefit can be achieve as $min-1$ or $max+1$ does not exist. Therefore, optimal span can be achieved only when the colors of $S_1$ are consecutive including either $min$ or $max$. 
\end{proof}

\begin{lemma}\label{lemma9a}
If three consecutive colors $c$, $c+1$, $c+2$ are used thrice each in $S_3$ then neither $c-1$ nor $c+3$ can be used in $S_3$.
\end{lemma}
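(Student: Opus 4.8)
The plan is to argue by a counting/covering contradiction. Suppose for contradiction that, in addition to $c$, $c+1$, $c+2$ each being used three times in $S_3$, the color $c-1$ is used somewhere in $S_3$; the case of $c+3$ is symmetric. The first step is to recall from Lemma~\ref{lemma5} that a color used thrice in $S_3$ must be placed on exactly three edges of $S_3$ whose incident vertices in $H_v$ are mutually at distance $2$ — i.e. they sit on alternate vertices of the $6$-cycle $H_v$. Since $H_v$ has only two such ``alternating'' vertex-triples (the two independent $3$-subsets of the $6$-cycle), and three distinct colors $c$, $c+1$, $c+2$ each need one of these triples, by pigeonhole at least two of the three colors use the \emph{same} triple of $H_v$-vertices. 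This will be the structural heart of the argument.

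Next I would extract the local conflict. If two of $c, c+1, c+2$ — say they are consecutive values differing by $1$ — are forced onto edges hanging off the same three vertices of $H_v$, then for each such vertex the two edges carrying these two colors share that $H_v$-vertex, hence are adjacent (distance $1$), which already violates the $L(1,2)$ separation of at least $2$ for colors differing by only $1$. If instead the two colors sharing a triple differ by $2$ (i.e. they are $c$ and $c+2$), then $c+1$ must occupy the \emph{other} triple; one then checks, using the adjacency pattern in Figure~\ref{Fig1}, that the three edges coloured $c+1$ are each adjacent to edges coloured $c$ and to edges coloured $c+2$, and that placing $c-1$ on any edge of $S_3$ — which, being in $S_3$, is incident to some vertex of $H_v$ — puts it at distance $\le 2$ from one of the three edges coloured $c$, contradicting Lemma~\ref{lemma1}-style reasoning combined with the distance-$2$ constraint between $c-1$ and $c$. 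I would phrase this last part via the ``covers'' terminology: the three edges coloured $c$ together cover all of $S_3$, so no edge of $S_3$ can receive $c-1$.

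The key remaining point is to show that the configuration in which $c$, $c+1$, $c+2$ are each used three times is \emph{already} so rigid that the three edges of color $c$ cover $S_3$. Here I would use Lemma~\ref{lemma5} once more: three edges mutually at distance $3$ (the color-$c$ triple) are incident to the three alternate vertices of $H_v$, and since every vertex of $H_v$ is within distance $2$ of at least one of those three alternate vertices, every edge of $S_3$ (each incident to exactly one $H_v$-vertex, per Figure~\ref{Fig1}) lies within distance $2$ of some color-$c$ edge. Hence $c-1$, which must differ from $c$ by at least $2$ wherever it is used within distance $2$, cannot appear anywhere in $S_3$; the same covering argument applied to the color-$(c+2)$ triple rules out $c+3$.

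\textbf{Main obstacle.} The delicate step is the pigeonhole/rigidity claim that forces the color triples of $c$, $c+1$, $c+2$ to collide on the two alternating vertex-sets of $H_v$ and, in particular, that no ``mixed'' placement (some edges on one triple, some on the other, reachable because an $S_3$-edge might be placed with slack) escapes the contradiction. Making this airtight requires a careful case check of exactly how the three occurrences of each color distribute over the six $H_v$-vertices and their attached $S_3$-edges, using the precise incidence structure of Figure~\ref{Fig1}; I expect this to be the bulk of the real work, while the final ``covering'' contradiction is then immediate from Lemma~\ref{lemma5}.
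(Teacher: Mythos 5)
Your overall strategy (reduce to the two alternating vertex-triples of $H_v$ via Lemma~\ref{lemma5}, then derive a conflict) is a reasonable shape for this argument, but two of your key steps rest on a misreading of the $L(1,2)$ constraint, and this breaks the proof. You claim that if two colors differing by $1$ land on edges sharing an $H_v$-vertex, this ``already violates the $L(1,2)$ separation of at least $2$.'' It does not: for $h=1$, $k=2$ the required separation at distance $1$ is only $1$; the separation $2$ is required at distance $2$. Two consecutive colors on adjacent edges is perfectly legal, and the true situation is the opposite of what you assert: if $c'$ and $c'+1$ are each used thrice, they \emph{must} occupy the \emph{same} alternating triple. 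Indeed, an edge of color $c'+1$ at a vertex $w$ of the other triple is joined by an $S_2$-edge to each of the two triple-vertices adjacent to $w$, hence is at distance at most $2$ from the $c'$-edges there, and it can be at distance $1$ (share an outer vertex) with at most one of them --- leaving it at distance exactly $2$ from the other, which \emph{is} a violation. Applied to the pairs $(c,c+1)$ and $(c+1,c+2)$, this forces all three colors onto one triple, exhausting all three $S_3$-edges at each of its vertices. Your pigeonhole case analysis, built on the reversed constraint, therefore dissolves.

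Your closing ``covering'' argument fails for the same reason: that the three $c$-edges cover $S_3$ (every edge of $S_3$ is within distance $2$ of one of them) only forbids reusing the color $c$ itself; it does not forbid $c-1$, which is allowed at distance $1$ from a $c$-edge. The correct finish is: $c-1$ cannot go at a vertex of the saturated triple, since no free $S_3$-edge remains there, and at a vertex of the other triple it sits at distance exactly $2$ from at least one $c$-edge by the adjacency count above. (The paper's own proof runs the distance-$2$ conflict directly: a $c-1$-edge adjacent to a $c$-edge must form a triangle with an $S_2$-edge, and then the occurrence of $c$ at the next alternating vertex is at distance $2$ from the $c-1$-edge.) As written, your argument would not survive the case analysis you yourself identify as the delicate step.
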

\begin{proof}
Observe that there are exactly $2$ sets of three alternating vertices in $H_v$ where a color can be used thrice at edges incident to any set of alternating vertices. If $c-1$ would have been used in $S_3$ then either it was used at an edge adjacent to the edges colored with $c$ or at an edge distance $3$ apart from the edge colored with $c$.  Now observe that $c$ and $c-1$ are used at two edges of $S_3$ which form a triangle with one edge of $S_2$. Suppose $c$, $c-1$ be the colors used at those two edges  $e,e_1 \in S_3$ respectively, where $e$ is incident to $u$ and $e_1$ is incident to $w$ where $uw \in S_2$. Note that $c$ is used thrice in $S_3$. Then $c$ must be reused at an edge incident to $x$, and  $xw\in S_2$. So $c$ and $c-1$ are used at two edges at distance $2$ apart, which violets the condition of $L(1,2)$-edge labeling. Hence $c-1$ can not be used in $G_v$. Similarly it can be shown that $c+3$ can also not be used in $G_v$. This implies that no $4$ consecutive colors can be used thrice each in $G_v$.
\end{proof}

\begin{theorem}\label{th1}\footnote{Theorem is  modified from that of the conference version~\cite{ours}.}
$\lambda'_{1,2}(G_v) \geq 16$.
\end{theorem}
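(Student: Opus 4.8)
The plan is to establish the lower bound $\lambda'_{1,2}(G_v) \geq 16$ by a careful counting argument over the $30$ edges of $G_v$, keeping track of how many times each color can be reused. First I would recall the structural facts already proved: $|S_1| = 6$, $|S_2| = 6$ (three pairs of opposite edges), $|S_3| = 18$, so $|E| = 30$. By Lemma~\ref{lemma1} each color used in $S_1$ is used exactly once in $G_v$; by Lemma~\ref{lemma3} a color used in $S_2$ is used at most twice; by Lemma~\ref{lemma5} a color used in $S_3$ is used at most thrice. By Theorem~\ref{th3} we may assume $S_1$ uses $6$ consecutive colors including an extreme color, say (after possibly reflecting the color scale) the colors $0,1,2,3,4,5$ with $0$ the minimum. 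The idea is that these six colors are "expensive" — each contributes only one edge — and moreover Lemma~\ref{lemma2} forces the colors $\pm 1$ adjacent to the $S_1$-block to also be scarce.

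The core of the argument would then be a weighting/discharging count. Suppose for contradiction that only the colors $0, 1, \dots, 15$ are available, i.e. $\lambda'_{1,2}(G_v) \leq 15$. We must cover all $30$ edges. The six colors $0,\dots,5$ used on $S_1$ cover only $6$ edges. The remaining $24$ edges lie in $S_2 \cup S_3$ and must be colored from $\{0,\dots,15\}$, but colors $0,\dots,5$ are already exhausted on $S_1$ (Lemma~\ref{lemma1}), so only the $10$ colors $6,\dots,15$ remain. A color from $S_2$ covers $\leq 2$ edges and one from $S_3$ covers $\leq 3$ edges, so $10$ colors cover at most $30$ edges — not obviously a contradiction yet, so the count must be sharpened. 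Here is where Lemma~\ref{lemma9a} and Lemma~\ref{lemma2} enter: to reach $24$ edges with $10$ colors we would need almost every one of those colors to be used three times, hence used in $S_3$; but Lemma~\ref{lemma9a} forbids four consecutive colors each used thrice in $G_v$, so among $6,7,\dots,15$ we cannot have long runs of triply-used colors. Partitioning $\{6,\dots,15\}$ into blocks of consecutive colors, each block of length $\ell$ contributes at most $3(\ell-1) + (\text{something} \leq 2)$ edges, and one checks this total falls short of $24$. Additionally, color $6$ is adjacent to the $S_1$-block color $5$, so by Lemma~\ref{lemma2} the color $6$ can be used at most once (if placed next to the $S_1$ edge colored $5$) or otherwise is constrained; this shaves off further capacity near the low end.

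Concretely, I would argue: by Lemma~\ref{lemma9a} at most three consecutive colors are used thrice; organizing $6,\dots,15$ greedily, the best one can do is runs like (three triples, one pair) repeated, giving roughly $3+3+3+2 = 11$ edges per four colors, i.e. about $\lceil 10/4 \rceil$ groups yielding at most $11 + 11 + 2 \cdot 1 = $ well under $24$ — and then refine the bookkeeping, also using that a color appearing only in $S_2$ caps at $2$ and that colors $6$ and $15$ are handicapped by adjacency to the $S_1$ block (color $5$) and by being the extreme available color respectively. Summing the maximal edge-coverage over an optimal arrangement of $\{6,\dots,15\}$ gives strictly fewer than $24$ coverable edges, contradicting the need to color all of $S_2 \cup S_3$. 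Hence at least one more color, namely $16$, is required, so $\lambda'_{1,2}(G_v) \geq 16$.

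The main obstacle I anticipate is getting the counting tight enough: the naive bound ($10$ colors $\times$ $3$ edges $= 30 \geq 24$) is not a contradiction, so the proof genuinely hinges on squeezing out the losses — the forced singletons near the $S_1$ block via Lemma~\ref{lemma2}, the "no four consecutive triples" restriction via Lemma~\ref{lemma9a}, and the fact that colors which land in $S_2$ rather than $S_3$ lose a unit of capacity. Making these interactions precise, and verifying that every way of distributing the colors $6,\dots,15$ among $S_2$ and $S_3$ covers at most $23$ edges, will require a short but careful case analysis on the block structure of the triply-used colors; I would expect to also need a case split on whether the extreme color in $S_1$ is the minimum or the maximum, though by symmetry of the color scale one case should follow from the other.
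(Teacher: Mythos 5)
Your overall strategy---reduce, via Theorem~\ref{th3} and Lemma~\ref{lemma1}, to covering the $24$ edges of $S_2\cup S_3$ with the ten colours $6,\dots,15$ and then show their total reuse capacity falls short---is the same one the paper follows (the paper organizes the count as a case analysis on how many colours appear in $S_2$). But the decisive quantitative claim, that every arrangement of $6,\dots,15$ covers at most $23$ edges, is precisely the hard part, and the arithmetic you offer does not establish it. Your block bound $3(\ell-1)+2$ gives $29$ for the single block $\{6,\dots,15\}$; your greedy estimate $11+11+2\cdot 1$ equals $24$, not ``well under $24$,'' and the final ``$2\cdot 1$'' is unsupported---nothing you cite caps the last two colours at one use each. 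In fact the straightforward combination of the ingredients you invoke tops out at exactly $24$: colour $6$ contributes at most $1$ (Lemma~\ref{lemma2}), the six $S_2$-edges force at least three colours to contribute only $2$ each (Lemma~\ref{lemma3}), the remaining six colours contribute at most $3$ each, giving $1+6+18=25$, and Lemma~\ref{lemma4} necessarily knocks one of those six down to $2$ (some non-$S_2$ colour is always adjacent to an $S_2$-colour), giving $24$. So the bound is tight, and the contradiction only emerges from the finer interactions you gesture at but do not carry out: Lemma~\ref{lemma4} caps \emph{every} colour adjacent to an $S_2$-colour at two uses, and Lemma~\ref{lemma9a} forbids any $S_3$-use of a colour flanking three consecutive triples, and one must verify that in every placement of the $S_2$-colours these effects cost strictly more than one edge in total.

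That verification is a genuine case analysis on how many colours are used in $S_2$ and how they are spaced (the paper splits on whether $S_2$ uses $6$, $5$, $4$, or $3$ colours, and in the three-colour case further splits on the gaps between them), and it constitutes the substance of the proof. As written, your argument stops just before this step, asserting ``one checks this total falls short of $24$'' where the check is the theorem. A secondary, minor point: Lemma~\ref{lemma2} caps colour $6$ at one use unconditionally, so your hedge ``or otherwise is constrained'' is unnecessary; and you should state explicitly that the reflection reducing ``max'' to ``min'' in Theorem~\ref{th3} is the map $c\mapsto 15-c$, which preserves all the constraints you use.
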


\begin{proof}
By Theorem~\ref{th3}, $6$ consecutive colors must be used to color the edges of $S_1$. Recall that, we assume the minimum color is used at $S_1$. Let $c_1$ be the maximum color used in $S_1$ and $c$ be the minimum color used in $S_2$. Now we consider the coloring of the edges in $S_2 \cup S_3$. If the edges of $S_2$ have all consecutive colors $c,c+1, \cdots, c+5$ then $6$ colors are needed for $S_2$. From Lemma~\ref{lemma3}, any color $c'$, used in $S_2$ can be reused at most once more in $G_v$ unless $c'=c_1+1$ (Lemma \ref{lemma2}). Thus we can color at most $6$ edges in $S_3$ using those colors. From Lemma~\ref{lemma4}, the color $c+6$ can be used at most twice in $G_v$. Note that color $c_1+1$ can be used at most once in $G_v$. So far, at most $9$ edges in $S_3$ are colored. So at least $9$ edges are left to be colored in $S_3$. So at least $3$ more colors are needed for $S_3$, as any color can be used at most thrice in $S_3$. However, from Lemma~\ref{lemma9a}, in that case, all of $c+7$, $c+8$ and $c+9$ can not be used thrice each in $S_3$. Thus at least $c+10$ is needed for $G_v$. Since $c-6$ is used in $S_1$, we get $\lambda'_{1,2}(G_v)\geq (c+10) - (c-6)=16$. Now, consider the case when $4$ colors are used in $S_2$. Observe that then in any possible coloring, at least $3$ colors can be used at most twice in $S_3$. Therefore, at most $9$ edges of $S_3$ can be colored. Hence here too, we can argue that $c+10$ must be used in $G_v$. In a similar manner, we can argue that  when five colors are used in $S_2$, the color $c+10$ must be used in $G_v$. Hence in all the cases discussed above, we get $\lambda'_{1,2}(G_v)\geq 16$. 

 Now consider the case when only three colors say $c$, $c'$, $c''$ are used in $S_2$. Without loss of generality assume $c'-c\geq 2$ and $c''-c'\geq 2$. 
 
First consider the cases assuming $c_1+1=c-1$.
Observe that if $c+1\neq c'-1$ and $c'+1\neq c''-1$, then $c\pm 1,c'\pm 1,c''\pm 1$ can color at most $(1+5\times2)=11$ edges of $S_3$. So at least $7$ edges are left to be colored in $S_3$ requiring at least $3$ more colors. So at least $9$ colors are required for $S_3$ and hence $\lambda'_{1,2}(G_v)\geq 17$. If $c''$ is maximum color, then $c''+1$ can not be used and in this case too, at least $17$ colors are required for $G_v$ and hence $\lambda'_{1,2}(G_v)\geq 16$. Now we consider the case when $c+1= c'-1$ but $c'+1 \neq c''-1$. Here $c+1, c'+1, c''\pm 1$ can color at most $8$ edges of $S_3$. Thus using $c-1$ also we are able to color at most $9$ edges in $S_3$. Therefore at least $9$ edges are left to be colored in $S_3$ requiring at least $3$ more colors. So at least $8$ colors are required for $S_3$ and hence $\lambda'_{1,2}(G_v)\geq 16$. Similarly we can argue that $\lambda'_{1,2}(G_v)\geq 16$ when $c+1\neq c'-1$ $c'+1 
=c''-1$. If $c+1= c'-1$ and $c'+1= c''-1$, then $c$, $c+2$ and $c+4$ are used in $S_2$. Here $c\pm 1,c+3,c+5$ can color at most $7$ edges of $S_3$. So at least $11$ edges are left to be colored in $S_3$ requiring at least $4$ more colors. So at least $8$ colors are required for $S_3$ and hence $\lambda'_{1,2}(G_v)\geq 16$. Similarly, one can verify that when $c_1+1\neq c-1$ the bound remains the same for all the cases.
Thus considering all cases, $\lambda'_{1,2}(G_v)\geq 16$.
\end{proof}

We assume that the minimum color is used in $S_1$. The maximum color can be used at most thrice in $S_3$ and at most twice in $S_2$. In all cases, there exists a vertex say $v'$ in $H_v$ such that color of any edge incident to $v'$ is neither minimum nor maximum. Now we consider the subgraph $G_{v'}$ of $T_6$ centering $v'$ and isomorphic to $G_v$. Let $min_1$ and $max_1$ be the minimum and maximum colors used in $S'_1$ in $G_{v'}$.
\begin{lemma}\label{lemma9}\footnote{Lemma is  modified from that of the conference version~\cite{ours}.}
If $max_1-min_1 \geq 7$, i.e., there exists at least two intermediate colors between $min_1$ and $max_1$ which are not used in $S'_1$, then $\lambda'_{1,2}(G_{v'})\geq 18$.
  
\end{lemma}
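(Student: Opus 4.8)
The plan is to refine the edge-count of the proof of Theorem~\ref{th1}, exploiting the two colours that $S'_1$ is now forced to skip. First I would suppose, for contradiction, that $G_{v'}$ admits an $L(1,2)$-edge labeling of span at most $17$, and rescale the colours to lie in $\{0,1,\dots,17\}$. By Lemma~\ref{lemma7}.i together with Lemma~\ref{lemma1}, the six edges of $S'_1$ receive six distinct colours, each used exactly once in all of $G_{v'}$; call this set $A$, so that $\max A-\min A=max_1-min_1\ge 7$ and therefore at least two values of $[min_1,max_1]$ lie outside $A$. By Lemma~\ref{lemma2}, every colour that differs by $1$ from a colour of $A$ is also used at most once in $G_{v'}$; in particular each of the two (or more) interior skipped values that is adjacent to a colour of $A$ is a once-only colour, and so is the colour immediately above $max_1$.

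Next I would re-run the bookkeeping of Theorem~\ref{th1} on the $24$ edges of $S'_2\cup S'_3$. None of these edges carries a colour of $A$; a colour of $S'_2$ is reused on at most one further edge of $S'_3$ (Lemma~\ref{lemma3}); a colour at distance $1$ from $S'_2$ is used at most twice (Lemma~\ref{lemma4}); a colour of $S'_3$ is used at most thrice (Lemma~\ref{lemma5}); and no four consecutive colours can each be used thrice on $S'_3$ (Lemma~\ref{lemma9a}). The only difference from the situation analysed in Theorem~\ref{th1} is that there are now at least two extra once-only colours located \emph{inside} the interval $[min_1,max_1]$ rather than merely just above $S'_1$; hence at least two fewer edges of $S'_2\cup S'_3$ can be coloured ``for free'', and the count of edges of $S'_3$ that still must be coloured pushes the top colour up by at least $2$ compared with Theorem~\ref{th1}. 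Carrying the accounting through exactly as there, the highest colour used exceeds $min_1$ by at least $18$, while the lowest colour used is at most $min_1$; hence the span is at least $18$, contradicting the assumption, and so $\lambda'_{1,2}(G_{v'})\ge 18$.

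The step I expect to be the main obstacle is the sub-case in which the two skipped values of $[min_1,max_1]$ form a single gap of length $\ge 2$ lying between two colours $a<a'$ of $A$: the interior values of such a gap are not automatically once-only and could, a priori, be reused up to thrice on $S'_3$, which would weaken the count. To handle this I would use the geometry of the $6$-cycle $H_{v'}$ and the alternating-triple structure underlying Lemmas~\ref{lemma5} and~\ref{lemma9a} to show that such an interior colour cannot be used thrice on $S'_3$ while also having both its neighbours used on $S'_3$ without forcing some colour of $A$ to reappear, contradicting Lemma~\ref{lemma1}. Apart from this, the argument is a finite case analysis on how the colours of $S'_1$, of $S'_2$, and the skipped values interleave, each case being a short counting step of the kind already carried out in Theorems~\ref{th3} and~\ref{th1}.
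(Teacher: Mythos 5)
Your overall strategy is the right one and matches the paper's: identify extra ``once-only'' colours forced by Lemma~\ref{lemma2}, then redo the reuse accounting on the $18$ edges of $S'_3$. But there is a genuine gap at the central quantitative step. You claim that one can carry the accounting of Theorem~\ref{th1} through ``exactly as there'' and simply add $2$ to the top colour because of the two extra once-only colours. That shortcut is not available: Theorem~\ref{th1} is proved under the hypothesis, supplied by Theorem~\ref{th3}, that $S_1$ receives six \emph{consecutive} colours including an extreme one, and its case analysis (on whether $S_2$ uses $6$, $5$, $4$ or $3$ colours, and on how those colours interleave) is built on that hypothesis. In the situation of this lemma the colours of $S'_1$ span an interval of length at least $7$ and are not consecutive, so the interleaving structure is different and a uniform ``$+2$'' is not justified; you must redo the case analysis and check, in each case, that the two additional once-only colours actually reduce the number of $S'_3$-edges colourable by already-introduced colours by enough. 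This is exactly what the paper does: with four once-only colours ($c_1$, $c_2$, $min_1-1$, $max_1+1$) in hand, it recounts case by case (e.g.\ with six colours in $S'_2$ at most $6+4=10$ edges of $S'_3$ are covered, leaving $8$ edges needing $3$ more colours, giving $6+6+7=19$ colours; the case of three doubly-used colours in $S'_2$ needs Lemma~\ref{lemma9a} again). Your write-up asserts the conclusion of this recount without performing it.

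Separately, the step you flag as ``the main obstacle'' is not actually an obstacle, and your proposed resolution via the geometry of $H_{v'}$ is both unnecessary and undeveloped. Since $max_1-min_1\ge 7$ and only six colours appear in $S'_1$, let $c_1$ be the \emph{least} and $c_2$ the \emph{greatest} colour in the open interval $(min_1,max_1)$ not used in $S'_1$; then $c_1-1$ and $c_2+1$ are both used in $S'_1$, $c_1\neq c_2$, and Lemma~\ref{lemma2} makes both once-only regardless of whether the skipped values form one gap or two. So the two once-only interior colours always exist; what remains missing from your argument is the actual recount described above, not the treatment of gap configurations.
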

\begin{proof}
At least two unused colors $c_1,c_2$ are there in $S'_1$ such that $\forall c \in \{c_1,~c_2\}$, either $c+1$ or $c-1$ is used in $S'_1$. From lemma~\ref{lemma2}, $c_1$, $c_2$, $min_1-1$ and $max_1+1$ can be used at most once each in $G_{v'}$. Let $x'$ be the number of colors required for $G_{v'}$. Let us first consider that $6$ colors are used in $S'_2$. These colors can be reused at most once each in $S'_3$. Consider that all of $c_1,~c_2,~min_1-1$ and $max_1+1$ are also used once each in $S'_3$. So at most $(6+4)=10$ edges of $S'_3$ have been colored so far. So at least $(18-10)=8$ edges are left to be colored in $S'_3$ requiring at least $3$ more colors. So, $x' \geq (6 (for\ S'_1) + 6 (for\ S'_2) + 7 (for\ S'_3)) = 19$ and hence $\lambda'_{1,2}(G_{v'})\geq 18$. Now assume $5$ colors are used in $S'_2$. The only possibility is that $4$ colors must be used once each in $S'_2$ and $1$ color should be used two times in $S'_2$. These $4$ colors can be reused at most once each in $S'_3$. So at  most $(4+4)=8$ edges of $S'_3$ have been colored so far. So at least $4$ more colors are required to color the remaining edges of $S'_3$. So, $x' \geq (6+5+8)=19$ and hence $\lambda'_{1,2}(G_{v'})\geq 18$. If $4$ colors are used in $S'_2$, similarly we can show that $\lambda'_{1,2}(G_{v'})\geq 18$.

 Now consider that $3$ colors  are used two times each in $S'_2$. These $3$ colors can not be used anymore in $S'_3$. So at least $(18-4)=14$ edges are left to be colored in $S'_3$, requiring at least $5$ more colors. At least $4$ out of these $5$ colors must be used three times each in $S'_3$. From Lemma~\ref{lemma9a}, if $3$ consecutive colors are used $3$ times each in $S'_3$, then at least one color can not be used in $S'_3$ resulting $\lambda'_{1,2}(G_{v'})\geq 18$. Otherwise, at most two sets of two consecutive colors can be reused three times each in $S'_3$. In that case, at least three colors $x,~y, ~z$ are there such that for all $w\in\{x,~y, ~z\}$ either $w+1$ or $w-1$ is in one of $S'_1$ or $S'_2$. 
 So each of them can not be used three times each. So these $5$ colors can color at most $(3\times 2+ 2\times 3) = 12$ edges in $S'_3$, requiring at least $1$ more color for $S'_3$. So at least $(4+6)=10$ colors are required for $S'_3$ and hence  $\lambda'_{1,2}(G_{v'})\geq 18$. For all other cases  we also get $\lambda'_{1,2}(G_{v'})\geq 18$ using similar argument.
\end{proof}

\begin{theorem}\label{th2}\footnote{Theorem is  modified from that of the conference version~\cite{ours}.}
$\lambda'_{1,2}(T_6)\geq 18$.
\end{theorem}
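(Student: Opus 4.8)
The plan is to combine Theorem~\ref{th1} and Lemma~\ref{lemma9} with a short case analysis that reduces the problem to the situation covered by the lemma. By Theorem~\ref{th1} we already know $\lambda'_{1,2}(G_v)\geq 16$, so to push the bound to $18$ it suffices to locate, in any optimal (or near-optimal) labeling, some vertex $v'$ of $T_6$ whose surrounding subgraph $G_{v'}$ is isomorphic to $G_v$ and satisfies the hypothesis $max_1-min_1\geq 7$ of Lemma~\ref{lemma9}; that lemma then immediately yields $\lambda'_{1,2}(G_{v'})\geq 18$, and since $G_{v'}$ is a subgraph of $T_6$, $\lambda'_{1,2}(T_6)\geq 18$.

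First I would set up the reduction. Assume for contradiction that there is an $L(1,2)$-edge labeling of $T_6$ with span at most $17$. Fix any vertex $v$ and its subgraph $G_v$. By Theorem~\ref{th3}, the six edges of $S_1$ must receive six consecutive colors including either the global minimum or the global maximum; say the minimum $0$ is used on $S_1$, so $S_1$ uses $\{0,1,2,3,4,5\}$. Now I use the paragraph already in the excerpt preceding Lemma~\ref{lemma9}: the maximum color is used at most thrice in $S_3$ and at most twice in $S_2$ of $G_v$, so by a counting/pigeonhole argument over the six vertices of the $6$-cycle $H_v$, there is a vertex $v'$ on $H_v$ such that no edge incident to $v'$ carries the minimum or the maximum color. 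I would make this step explicit: the edges incident to $v'$ within $G_v$ are precisely the "spokes" of $G_{v'}$, i.e. they form $S'_1$ of the translated subgraph $G_{v'}$, so $\{min_1,max_1\}\cap\{0,17\}=\emptyset$, forcing $1\leq min_1$ and $max_1\leq 16$.

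Next I would argue that $max_1-min_1\geq 7$ must hold. By Lemma~\ref{lemma7}.i, $S'_1$ needs six distinct colors, so $max_1-min_1\geq 5$ always. If $max_1-min_1\in\{5,6\}$, then $S'_1$ uses a block of six consecutive colors with at most one interior gap, none of which is $0$ or $17$; I would show that this configuration is incompatible with an optimal labeling of $G_{v'}$ by invoking Theorem~\ref{th3} applied to $G_{v'}$ — it says an optimal labeling forces six \emph{consecutive} colors on $S'_1$ \emph{including an extreme color}, which contradicts $\{min_1,max_1\}\cap\{0,17\}=\emptyset$ unless the labeling of $G_{v'}$ is not optimal, i.e. unless $G_{v'}$ already requires more than $17$ colors — but that is exactly the desired contradiction. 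More carefully: either $G_{v'}$ uses $\geq 18$ colors and we are done, or it uses $\leq 17$, in which case within the restricted palette $\{0,\dots,17\}$ the labeling of $G_{v'}$ cannot be optimal for $G_{v'}$ in the sense of Theorem~\ref{th3}, and re-running the lower-bound arguments of Theorem~\ref{th1}/Lemma~\ref{lemma9a} on $G_{v'}$ with the extra handicap that neither extreme color is available on $S'_1$ forces $max_1-min_1\geq 7$. Once $max_1-min_1\geq 7$ is established, Lemma~\ref{lemma9} gives $\lambda'_{1,2}(G_{v'})\geq 18$, contradicting span $\leq 17$ and completing the proof.

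The main obstacle I anticipate is the middle step: cleanly ruling out the cases $max_1-min_1\in\{5,6\}$ for the translated subgraph. The delicate point is that Theorem~\ref{th3} is stated for "optimal labeling of $G_v$," and here we are not assuming the labeling restricted to $G_{v'}$ is optimal for $G_{v'}$ in isolation — we only know the global span is $\leq 17$. So I would need to either strengthen the bookkeeping (tracking how the forbidden extreme colors on $S'_1$ interact with the counting bounds of Lemmas~\ref{lemma2}--\ref{lemma6} for $G_{v'}$) or argue that a block of six consecutive non-extreme colors on $S'_1$ still leaves too few colors below $min_1$ and above $max_1$ to absorb the single-use colors $min_1-1$, $max_1+1$ together with the $S'_2$ and $S'_3$ requirements, thereby forcing the span of $G_{v'}$ itself above $17$. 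This is essentially a more careful re-run of the Theorem~\ref{th1} argument, so it should go through, but writing it so that it visibly covers the sub-cases $max_1-min_1=5$ and $=6$ (including which of the twelve translation choices of $v'$ is used) is where the real work lies.
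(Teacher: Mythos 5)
Your overall skeleton matches the paper's: locate a vertex whose incident edges avoid the extreme colors, show that the six colors on its spokes spread over a range of at least $7$, and then invoke Lemma~\ref{lemma9}. But the middle step --- establishing $max_1-min_1\geq 7$ --- is exactly where your argument has a genuine gap, and the route you propose for closing it cannot work as stated. Invoking Theorem~\ref{th3} on $G_{v'}$ gives you nothing: that theorem constrains \emph{optimal} labelings of $G_{v'}$, and a span-$17$ labeling of $T_6$ restricted to $G_{v'}$ being non-optimal for $G_{v'}$ (whose optimum may well be $16$ by Theorem~\ref{th1}) is not a contradiction, so "the labeling of $G_{v'}$ is not optimal, i.e.\ $G_{v'}$ requires more than $17$ colors" is a non sequitur. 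Worse, the case $max_1-min_1=5$ with neither color extreme cannot be dispatched by re-running the Lemma~\ref{lemma9} count at $G_{v'}$ alone: there the only handicapped colors are $min_1-1$ and $max_1+1$ (each usable once, by Lemma~\ref{lemma2}), so the tally is $6$ colors for $S'_1$, $6$ for $S'_2$ reusable once each in $S'_3$, $2$ single-use colors, leaving $10$ edges of $S'_3$ needing $4$ more colors --- a total of $18$ colors, i.e.\ span $\geq 17$ only. So the handicap of losing both extremes on $S'_1$ is, by pure counting, worth only one extra color, not two.

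The paper closes this gap by a different mechanism that your proposal is missing: it does not analyze $G_{v'}$ in isolation, but applies Theorem~\ref{th3} to two vertices $u,w$ of the cycle $H_x$ around the chosen vertex $x$, so that the six spokes at $u$ and the six spokes at $w$ each carry consecutive blocks of colors. The edges $xu$ and $xw$ belong simultaneously to $S_{x1}$ and to those blocks, and the adjacency constraints inside the triangle $x,u,w$ (or along the path $u,x,w$ when $w\notin N(u)$) force $|f'(xu)-f'(xw)|\geq 2$ with neither color consecutive to any other spoke of $x$, which is what yields $max-min\geq 7$ on $S_{x1}$. That interaction between neighboring "consecutive blocks" is the key idea your proposal lacks; without it the cases $max_1-min_1\in\{5,6\}$ remain open, as you yourself suspected in your final paragraph.
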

  \begin{proof}
    Assume that $x$ be a vertex which is not adjacent to edges colored with  maximum and minimum  colors used in $G_x$. Let us consider $G_x$ is not colored and $u$, $w$ be two vertices of $H_x$ in $G_x$. Let us define $S_{x1}$ as the set of edges adjacent to $x$.   We consider the following two cases.

    \textbf{When $w \in N(u)$:} $u$ and $w$ are connected by an edge $e$. Let $\{c_1, \cdots, c_6 \}$ and $\{c'_1, \cdots, c'_6 \}$ be two sequences consisting of consecutive colors are used at the edges incident to $u$ and $w$ respectively. It is possible to assign consecutive colors at those edges when $e$ is colored with either $c_6=c'_1$ or $c_1=c'_6$. Now observe two edges $e'$ and $e'_1$ of $S_{x1}$ are already colored and those are not consecutive. Note that $\vert f'(e')-f'(e'_1)\vert \geq 2$. If $\vert f'(e')-f'(e'_1)\vert = 2$ then $ f'(e')$ and $f'(e'_1)$ is neither minimum nor maximum color used in $u$ and $w$. Then any color of any other edge in $S_{x1}$ is neither consecutive to $f'(e')$ nor $f'(e'_1)$. So $max-min \geq 7$ where $min$ and $max$ be the minimum and maximum colors used in $S_{x1}$. If $\vert f'(e')-f'(e'_1)\vert > 2$, then also $max-min \geq 7$. Hence from lemma~\ref{lemma9}, $\lambda'_{1,2}(G_{x})\geq 18$.

   \textbf{When $w \notin N(u)$:} Note that $x\in \{ N(u)\cap N(w) \} $. Let two sequences $\{c_1, \cdots, c_6 \}$ and $\{c'_1, \cdots, c'_6 \}$ consisting of consecutive colors are used at the edges incident to $u$ and $w$ respectively. Let $uv$ and $wv$ are $e'$ and $e'_1$ respectively. If $f'(e')$ and $f'(e'_1)$ are consecutive then either $ f'(e')=c_6,\ f'(e'_1)=c'_1$ or $\ f'(e')=c_1,\ f'(e'_1)=c'_6$. Now observe that for any other edge $e$ in $S_{x1}$, $\vert f'(e) - f'(e') \vert >2$ implying $max-min \geq 7$ where $min$ and $max$ be the minimum and maximum colors used inc $S_{x1}$. If $f'(e')$ and $f'(e'_1)$ are not consecutive then 
$\vert f'(e')- f'(e'_1) \vert \geq 2 $. If $\vert f'(e')- f'(e'_1) \vert = 2 $ then the intermediate color must be used at an edge $e\in S_{x1}$. There are still $4$ edges remain uncolored. It can be checked that for any coloring of the rest of the graph, there exists a vertex $y \in H_x$ or $ y\in  N(z), z \in H_x$, for which $max-min \geq 7$ where $min$ and $max$ be the minimum and maximum colors used to color the edges incident to $y$ and they are neither maximum nor minimum color used in $G_x$. Hence from lemma~\ref{lemma9}, $\lambda'_{1,2}(G_{x})\geq 18$.
\end{proof}

\subsection{Octagonal Grids}
Graph $G$ shown in figure \ref{T_8} is a subgraph of the infinite octagonal grid $T_8$, consisting of all the edges incident to the vertices $a,b, c, \cdots, p$. Let $H$ be a subgraph of $G$ consisting of all the edges incident to the vertices $g, f, k, j$. Clearly $|E(H)|= 26$, $|E(G)|=86$ and $|E(G \setminus H)|=60$. Observe that the distance between any two edges in $H$ is at most two. Hence in $H$ no color can be used twice. Since $|E(H)|= 26$, we get $\lambda'_{1,2}(T_8) \geq \lambda'_{1,2}(H) \geq 25$. If one color in $\{0, 1, \cdots, 25\}$ is {\it unused} at $H$ then $\lambda'_{1,2}(H) \geq 26$.  Here our goal is to prove $\lambda'_{1,2}(G) \geq 26$. Therefore, we can proceed assuming that no color in $\{0, 1, \cdots, 25\}$ is unused in $H$ and we show that, at least one extra color is needed for proper coloring of $G$. Here, we define a new notion as \textit{fork}. We say that the three edges $e_1=(f,c), e_2=(f,d), e_3=(f,e)$ are forming a fork at vertex $f$ and denoted as $\mathcal{F}_f$. The edges forming the forks $\mathcal{F}_f$, $\mathcal{F}_g$, $\mathcal{F}_j$ and $\mathcal{F}_k$ at vertices $f$, $g$, $j$ and $k$ respectively are all marked by bold edges in the figure. Now, we state the following observations regarding the reusability of the colors used in $H$:
\begin{observation}\label{8reg14}
There exists $14$ edges in $H$, each color used there can be reused at most twice in $G \setminus H$. 
\end{observation}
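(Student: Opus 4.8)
The plan is to pin down the $14$ edges geometrically and then, for each of them, count how many edges of $G\setminus H$ could legally carry the same colour.

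First I would set up coordinates on $T_8$: since $T_8$ is the $8$-regular king grid, vertex distance is the Chebyshev ($\ell_\infty$) distance, and $\{f,g,j,k\}$ is the central $2\times 2$ block of the $4\times 4$ block of vertices $a,\dots,p$ on which $G$ lives. Reading Figure~\ref{T_8}, the $26$ edges of $H$ split into the $12$ bold \emph{fork} edges $\mathcal{F}_f,\mathcal{F}_g,\mathcal{F}_j,\mathcal{F}_k$ and the remaining $14$ edges — the four ``axis'' edges inside the block ($(f,g),(j,k),(f,j),(g,k)$), the two ``diagonal'' edges inside the block ($(f,k),(g,j)$), and the eight ``side'' edges leaving the block — and these $14$ are exactly the ones the observation concerns. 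The elementary fact used throughout is: a colour on an edge $e=(u,v)$ can reappear on an edge $e'=(u',v')$ only when $d(e,e')\ge 3$, which by the distance definition means that both $u'$ and $v'$ lie at Chebyshev distance $\ge 2$ from both $u$ and $v$.

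Second, for one representative $e$ of each orbit of the $14$ edges under the dihedral symmetry of the configuration (so only three or four cases: an axis edge, a diagonal edge, and one or two types of side edge), I would list the \emph{far set} $R_e\subseteq E(G\setminus H)$ of edges all of whose endpoints are at $\ell_\infty$-distance $\ge 2$ from both endpoints of $e$. Because $G$ is only a bounded $4\times 4$ patch, $R_e$ is small and concentrated near the boundary of $G$, and the claim to verify is that $R_e$ contains no three edges that are pairwise at distance $\ge 3$. Granting this, a colour placed on $e$ can be placed on at most two further edges of $G\setminus H$ (any further occurrences would also have to be mutually far), so it is reused at most twice there — which is the statement. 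This is also the natural reading of ``reused at most twice in $G\setminus H$'': at most two occurrences among the $60$ edges of $G\setminus H$, hence at most three occurrences in all of $G$, since the colour already appears once in $H$ and cannot repeat inside $H$ (all $26$ edges of $H$ are pairwise within distance $2$).

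The routine but delicate part is this last verification: correctly extracting $R_e$ from Figure~\ref{T_8} for each representative edge and then checking that every pair of mutually-far edges in $R_e$ already blocks all remaining candidates. There is no conceptual obstacle here — it is a finite check over a handful of edge types — but the picture is easy to misread, so I would present it as a compact table giving, for each representative, its far set and the maximum size of a pairwise-distance-$3$ subfamily (which should always be $\le 2$). Exploiting the dihedral symmetry of the central block is what keeps this table short; without it one would face $14$ separate and tedious checks.
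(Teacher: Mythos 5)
Your approach is essentially the paper's: the $14$ edges are exactly $E(H)$ minus the twelve fork edges, the dihedral symmetry reduces the work to three representatives (a side edge such as $(g,c)$, a boundary edge of the central block such as $(g,f)$, and a diagonal such as $(j,f)$), and for each one the point is that the edges of $G\setminus H$ at distance $\ge 3$ from it contain no three that are pairwise far. The paper carries out the finite check you defer to a table by a cleaner device: for each representative it lists the vertices on which a far edge can be incident and notes that these split into two cliques (e.g.\ $\{l,m,n\}$ and $\{o,p\}$ for $(g,c)$), so the colour can be reused at most once per clique, hence at most twice.
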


\begin{proof}
Consider the graph $H'= E(H)\setminus (\mathcal{F}_f\cup \mathcal{F}_g \cup \mathcal{F}_j \cup\mathcal{F}_k)$. Note that $|E(H')|=26-4 \times 3= 14$. Our claim is that each color used in $H'$ can be reused at most twice in $G \setminus H$. One can verify that, 
repetition pattern of the colors of the edges $(b,f)$, $(i,g)$, $(h,j)$, $(o,k)$, $(j,n)$, $(k,e)$ and $(f,l)$ are symmetric to that of the color used in  $(g,c)$. In this sense, edges $(g,j)$, $(j,k)$ and $(k,f)$ are symmetric to $(g,f)$; and edge $(g,k)$ is symmetric to $(j,f)$. So, our goal is to show the following: if color $c'$ is used in any of these three edges then $c'$ can be reused at most twice in $G \setminus H$.
Let us first take the edge, $e'=(g,c)$, and assume, $f'(e')=c'$. The color $c'$ can be used only in some of the edges incident on the vertices $l, m,n, o, p$. Note that $\{l,m,n\}$ and $\{o,p\}$ are two cliques. Hence there can not be more than two edges, where the color $c'$ can be reused without violating the constraints of $L(1,2)$-edge labeling. Now, take the edge, $e'=(g,f)$. Assume that, $f'(e')=c'$. The color $c'$ can be reused in some edges incident on the vertices $m,n,o,p$, where $\{m,n\}$ and $\{o,p\}$ are two cliques. Hence here also, color $c'$ can be reused at most twice in $G \setminus H$. Now we have one more edge to analyse, i.e, $e'= (j,f)$. Let $f'(e')=c'$. There are only two vertices $a$ and $m$, in the graph $G$, whose incident edges can get color $c'$. Hence any color used in $H'$ can be reused at most twice in $G \setminus H$.  
\end{proof}
\begin{center}
\begin{figure}[h!]
\centerline{\includegraphics[width=8cm]{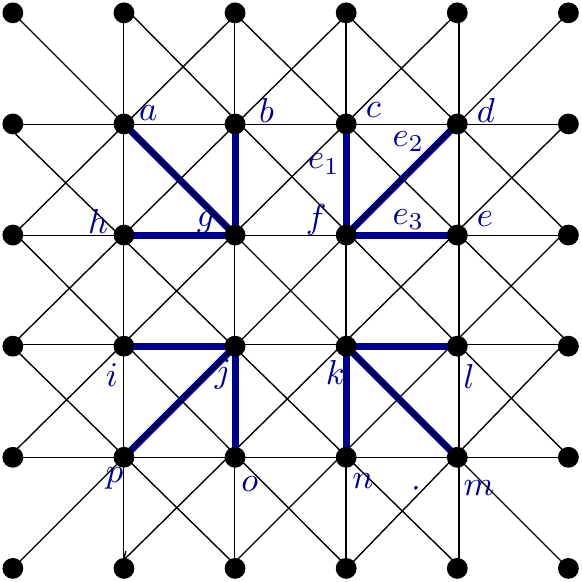}}
\caption{Sub graph $G$ of $T_8$ }
\label{T_8} 
\end{figure} 
\end{center}

\begin{observation}\label{8regfork}
There exists $12$ edges in $H$, each color used there can be reused at most thrice in $G \setminus H$. 
\end{observation}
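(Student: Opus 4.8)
The plan is to mirror the strategy of Observation~\ref{8reg14}: reduce to a handful of edges by symmetry, read the relevant ``far'' vertices off Figure~\ref{T_8}, and then bound a set of pairwise‑distant edges by $3$. The $12$ edges in question are exactly those of the four forks $\mathcal{F}_f,\mathcal{F}_g,\mathcal{F}_j,\mathcal{F}_k$, since each fork contributes three edges and $|E(H)\setminus E(H')| = 26-14 = 12$. Analogously to the symmetry arguments used in Observation~\ref{8reg14}, the four forks are interchanged by symmetries of the octagonal grid that fix $G$, so it suffices to prove the claim for the three edges $(f,c),(f,d),(f,e)$ of $\mathcal{F}_f$. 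Moreover, a reflection of $G$ fixing the vertex $f$ interchanges two of these three edges and fixes the third; hence only two representative edges need to be analysed, say the fixed edge $e'=(f,d)$ and one edge $e'=(f,c)$ of the swapped pair.

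For a fixed representative $e'$ with $f'(e')=c'$, observe that any edge that reuses $c'$ must be at distance at least $3$ from $e'$ — distance $\le 1$ violates the $h=1$ constraint (difference $0$) and distance $2$ violates the $k=2$ constraint — and by the same reasoning all edges carrying $c'$ are pairwise at distance at least $3$. So the task reduces to bounding, among the edges of $G\setminus H$ lying at distance $\ge 3$ from $e'$, the maximum size of a set that is also pairwise at distance $\ge 3$. I would first list from Figure~\ref{T_8} the vertices all of whose incident edges in $G$ are at distance $\ge 3$ from $e'$ (these are the only vertices whose incident edges may receive $c'$), then note that these vertices fall into a small number of cliques, so at most one edge per clique can carry $c'$, exactly as in Observation~\ref{8reg14}. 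Because the forks sit closer to the boundary of $G$ than the edges of $H'$, I expect the far region of $(f,d)$ to reach into three such cliques (yielding the bound $3$), while the far region of $(f,c)$ is a mild enlargement of the $(g,c)$‑case already treated, again giving at most three pairwise‑distant edges.

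The main obstacle is the finite but delicate bookkeeping in this second step: pinning down precisely which edges of $G\setminus H$ are at distance $\ge 3$ from each of the two representative fork edges, and then verifying that no set of pairwise‑distance‑$\ge 3$ edges among them has size $4$. A clean way to organise this is to fix one candidate far edge, restrict attention to the remaining far edges that are also at distance $\ge 3$ from it, and check that at most two more can be adjoined; the clique decomposition of the far vertices makes each such check short. Once the two representative cases of $\mathcal{F}_f$ are settled, the fourfold symmetry of the forks closes the argument for all $12$ edges.
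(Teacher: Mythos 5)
Your proposal is correct and follows essentially the same route as the paper: identify the $12$ edges as the fork edges, reduce by symmetry to $\mathcal{F}_f$, and bound the reuse by the clique decomposition of the vertices whose incident edges lie at distance $\ge 3$ from the fork. The paper carries out the bookkeeping you defer, showing that for every edge of $\mathcal{F}_f$ the admissible far vertices are $m,n,o,p,i,h,a$, which split into the three cliques $\{m,n\}$, $\{o,p,i\}$ and $\{h,a\}$, giving the bound of three reuses exactly as you anticipate.
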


\begin{proof}
Here we discuss about the edges in $E(H)\setminus E(H')$, i.e., $(\mathcal{F}_f\cup \mathcal{F}_g \cup \mathcal{F}_j \cup\mathcal{F}_k)$. One can observe that, $\mathcal{F}_f$ is symmetric to all other $\mathcal{F}_x, \forall x\in \{g,j,k\}$. So, if we can prove the statement for all edges in $\mathcal{F}_f=\{e_1, e_2, e_3\}$ then we are done. Note that, all the colors used in edges $\{e_1, e_2, e_3\}$ can be reused at some edges incident on the vertices $m,n,o,p,i,h,a$. Note that $\{m,n\}$, $\{o,p,i\}$ and $\{h,a\}$ are three cliques. Hence, there does not exist more than three vertices at a time, such that $c'$ can be used at edges incident on them.    
\end{proof}

As $|E(G \setminus H)|=60$ and $\lambda'_{1,2}(H)=25$, from Observations \ref{8reg14} and \ref{8regfork} it is evident that, to make $\lambda'_{1,2}(G)=25$ as well, we need to reuse at least eight colors ($14 \times 2 + 4\times 2 + 8\times 3=60)$ thrice each in $G \setminus H$. Now, in next two lemmas we discuss about reusability of a color for three times and its constraints.

\begin{lemma}\label{fork3}
If $c'$ be a color used at a fork edge in $H$ and reused thrice in $G \setminus H$ then $c'\pm 1$ must be used at the other two edges of that fork, otherwise $\lambda'_{1,2}(H)\geq 26$.
\end{lemma}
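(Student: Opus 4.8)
The plan is to argue by contradiction against the standing hypothesis that every colour of $\{0,1,\dots,25\}$ occurs in $H$ (recall that otherwise $\lambda'_{1,2}(H)\ge 26$ already). By the symmetry between the four forks noted in the proof of Observation~\ref{8regfork}, I may assume the fork is $\mathcal{F}_f$, and since the three edges of a fork play interchangeable roles I write $e_1=(f,c)$ for the edge carrying $c'$ and $e_2=(f,d)$, $e_3=(f,e)$ for the other two. From the proof of Observation~\ref{8regfork}, the three reuses of $c'$ in $G\setminus H$ must occur at edges $f_1,f_2,f_3$ incident to exactly one vertex of each of the three cliques $\{m,n\}$, $\{o,p,i\}$, $\{h,a\}$; I fix one such configuration.

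Next I would pin down where the colours $c'-1$ and $c'+1$ are allowed to appear in $G$. Since each of $c'-1$ and $c'+1$ differs from $c'$ by only $1<2$, neither may be placed on an edge at distance exactly $2$ from any of the four edges $e_1,f_1,f_2,f_3$ (adjacency to them is still permitted). Reading off Figure~\ref{T_8}, I would verify that the distance-$\le 2$ neighbourhoods of $e_1$ and of the three reuse edges together exhaust $E(G)$ with the sole exceptions of $e_2$ and $e_3$: every edge incident to $f$ or to $c$ other than $e_1,e_2,e_3$ lies within distance $2$ of at least one of $f_1,f_2,f_3$, every remaining edge of $\mathcal{F}_g\cup\mathcal{F}_j\cup\mathcal{F}_k$ and of $H'$ is already within distance $2$ of $e_1$ (because $g,j,k$ are adjacent to $f$), and every edge of $G\setminus H$ outside the reuse set is within distance $2$ of one of $e_1,f_1,f_2,f_3$ as well. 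One also checks that $e_2$ and $e_3$ themselves remain at distance $\ge 3$ from each $f_i$, so that they are genuinely admissible for $c'\pm 1$. Hence in the whole of $G$, and in particular in $H$, the colours $c'-1$ and $c'+1$ can occur only at $e_2$ and $e_3$.

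To finish: if $c'$ is not an extreme colour then $c'-1$ and $c'+1$ both lie in $\{0,\dots,25\}$, so by the standing hypothesis each of them occurs in $H$; since the only edges of $H$ that can carry them are $e_2$ and $e_3$, and $e_2\ne e_3$ while $c'-1\ne c'+1$, this forces $\{f'(e_2),f'(e_3)\}=\{c'-1,c'+1\}$. If the conclusion fails, then one of $c'\pm 1$ appears nowhere in $H$, and the fact noted earlier that an unused colour in $H$ forces $\lambda'_{1,2}(H)\ge 26$ completes the proof. (When $c'$ is the minimum or maximum colour only one of $c'\pm 1$ exists, and the same reasoning applies verbatim to that one.)

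I expect the second paragraph to be the real work: showing that $\{e_1,f_1,f_2,f_3\}$ dominates $E(G)$ up to the two fork edges $e_2,e_3$, uniformly over all admissible placements of $f_1,f_2,f_3$ on the clique vertices $m,n,o,p,i,h,a$. The symmetry among the forks removes the choice of fork and the clique structure bounds the number of placements, but this remains a finite case check carried out against Figure~\ref{T_8} rather than a slick one-liner; the delicate point is confirming both that no ``escape'' edge for $c'\pm 1$ survives outside $\{e_2,e_3\}$ and that $e_2,e_3$ do not accidentally lie at distance $2$ from a reuse edge.
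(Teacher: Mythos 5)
Your argument follows the paper's proof essentially step for step: reduce to $\mathcal{F}_f$ by symmetry, locate the three forced reuses of $c'$ on the cliques $\{m,n\}$, $\{o,p,i\}$ and $\{h,a\}$, show that every edge of $H$ other than $e_2,e_3$ lies at distance exactly $2$ from $e_1$ or from one of the reuse edges, and then invoke the standing fact that a colour missing from $H$ forces $\lambda'_{1,2}(H)\geq 26$. The finite distance check you defer is precisely the verification the paper carries out (explicitly for the five edges $(f,b),(f,g),(f,j),(f,k),(f,l)$), and your stronger claim that $c'\pm 1$ is excluded from all of $E(G)$ is more than is needed --- only the restriction to $H$ enters your final step, exactly as in the paper.
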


\begin{proof}
From Observations \ref{8reg14} and \ref{8regfork} it is clear that, only the colors used at the four forks $\mathcal{F}_f, \mathcal{F}_g, \mathcal{F}_j,\ \text{and}\ \mathcal{F}_k$ in $H$ can be reused thrice in $G \setminus H$. If we can prove for any one of the four forks then symmetrically it can be shown for others too. So, let us consider the fork $\mathcal{F}_f=\{e_1,e_2,e_3\}$. Without loss of generality assume $f'(e_1)=c'$. Note that  $c'$ can be reused at some edges incident on the vertices $m,n,o,p,i,h,a$. Since $\{m,n\}$, $\{o,p,i\}$ and $\{h,a\}$ are three cliques, $c'$ must be reused at the three edges incident on a vertex from $\{m,n\}$, $\{o,p,i\}$ and $\{h,a\}$ each, respectively. All edges in $H$ which are not incident on $f$ are exactly distance $2$ apart from $e_1$ and hence $c'\pm 1$ can not be used there for any $L(1,2)$-edge labelling. We now consider the possibility of using $c'\pm 1$ at the five edges $(f,b), (f,g), (f,j),(f,k),(f,l)$ which are incident on $f$ but not in $\{e_2,e_3\}$. To reuse $c'$ thrice in $G \setminus H$, $c'$ must be used at an edge incident on a vertex in $\{h,a\}$ but not incident on a vertex in $\{b,g\}$. Both  $(f,b)$ and $(f,g)$ are exactly distance $2$ apart from any such edge. Hence $c'\pm 1$ can not be used at $(f,b)$ and $(f,g)$. Also it is clear that, $c'$ must be reused at an edge incident on a vertex in $\{m,n\}$ but  not incident on a vertex in $\{l,k\}$, which is exactly symmetric to the previous case. So, $c'\pm 1$ can not be used at $(f,l)$ and $(f,k)$ as well. Now consider the case when $c'$ must be used at an edge incident on a vertex in $\{o,p,i\}$ but not incident on $j$. Here also $(f,j)$ is distance $2$ apart from any such edge and hence $c'\pm 1$ can not be used there. Therefore, $c' \pm 1$ have to be used at edges $e_2$ and $e_3$. If we consider $f'(e_2)=c'$ (or, $f'(e_3)=c'$) then also, we can argue in a similar way and show that, other than $e_1$ and $e_3$ (or, $e_1$ and $e_2$), there is no edges in $H$ which can get colors $c' \pm 1$.  However, we must have to use $c'\pm 1$ in $H$ as otherwise $\lambda'_{1,2}(H)\geq 26$.
\end{proof}

Now, if all three colors $c'-1,c', c'+1$ reused thrice each in $G\setminus H$ then from Lemma \ref{fork3} we can conclude that $c'\pm 2$ can not be used in $H$. So, clearly all three of the colors can not be used thrice each. Next lemma says about the reusability of these three colors.

\begin{lemma}\label{fork2}
Only two colors used at a fork in $H$ can be reused thrice in $G \setminus H$ if no color is unused in $H$, that too when one of them is either the minimum or the maximum color in that span.
\end{lemma}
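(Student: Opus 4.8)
The plan is to reduce the statement to a single fork and then run a short case analysis driven entirely by Lemma~\ref{fork3}. By the symmetry among $\mathcal{F}_f,\mathcal{F}_g,\mathcal{F}_j,\mathcal{F}_k$ already exploited in the proof of Observation~\ref{8regfork}, it suffices to treat $\mathcal{F}_f=\{e_1,e_2,e_3\}$. Write $\alpha=f'(e_1)$, $\beta=f'(e_2)$, $\gamma=f'(e_3)$; these are pairwise distinct because all edges of $H$ receive distinct colors. I will use the standing hypotheses that the span is $\{0,1,\dots,25\}$, that no color of this span is unused in $H$, and (via Observations~\ref{8reg14} and~\ref{8regfork}) that a color reused thrice in $G\setminus H$ can only sit on one of the twelve fork edges.

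The first step I would isolate is a sharpened reading of Lemma~\ref{fork3}. Suppose a fork color, say $\alpha$, is reused thrice in $G\setminus H$. If $\alpha$ is \emph{non-extreme}, i.e.\ $\alpha\notin\{0,25\}$, then $\alpha-1$ and $\alpha+1$ both exist and both must appear somewhere in $H$ (no color is unused); by Lemma~\ref{fork3} they can appear only on the two remaining edges of $\mathcal{F}_f$, and since that is exactly a two-element set we get $\{\beta,\gamma\}=\{\alpha-1,\alpha+1\}$. If instead $\alpha$ is extreme, say $\alpha=25$, then only $\alpha-1=24$ needs placing and the same argument yields $24\in\{\beta,\gamma\}$ (symmetrically $1\in\{\beta,\gamma\}$ when $\alpha=0$).

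Next I would show that two non-extreme colors of the fork cannot both be reused thrice. Assume $\alpha$ and $\beta$ are non-extreme and both reused thrice. Applying the first step to $\alpha$ gives $\{\beta,\gamma\}=\{\alpha-1,\alpha+1\}$, so $|\alpha-\beta|=1$; by symmetry take $\beta=\alpha+1$ and $\gamma=\alpha-1$ (the other orientation is identical). Applying the first step now to $\beta=\alpha+1$ forces $\{\alpha,\gamma\}=\{\beta-1,\beta+1\}=\{\alpha,\alpha+2\}$, hence $\gamma=\alpha+2$, contradicting $\gamma=\alpha-1$. Thus whenever two colors of $\mathcal{F}_f$ are reused thrice, at least one of them is extreme --- this is the ``one of them is the minimum or the maximum'' clause. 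To rule out all three colors being reused thrice, note that then every pair among $\alpha,\beta,\gamma$ would contain an extreme color by the line just proved, so at most one of them is non-extreme; since $0$ and $25$ are the only extreme colors and $\alpha,\beta,\gamma$ are distinct, $\{\alpha,\beta,\gamma\}=\{0,25,\delta\}$ for a non-extreme $\delta$. Then the extreme case of the first step applied to $0$ gives $1\in\{25,\delta\}$, so $\delta=1$, while applied to $25$ it gives $24\in\{0,\delta\}=\{0,1\}$, which is false. This contradiction gives the ``at most two'' clause, completing the argument.

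I expect the only real friction to be boundary bookkeeping: Lemma~\ref{fork3} must be invoked with the right convention for extreme colors (only the neighbor that exists is pinned, but it is still forced onto one of the two sibling fork edges, precisely because no color may go unused in $H$), and one must keep straight that ``the other two edges of the fork'' is a genuine two-element set, which is what makes the sets $\{\alpha-1,\alpha+1\}$ and $\{\alpha,\alpha+2\}$ comparable and drives the chain of contradictions. No step beyond these small case distinctions is anticipated.
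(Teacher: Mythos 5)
Your proposal is correct and follows essentially the same route as the paper: everything is driven by Lemma~\ref{fork3} together with the fact that, when no color is unused, each of the $26$ colors occupies exactly one edge of $H$, so the $\pm 1$ neighbours of a thrice-reused fork color are pinned to the two sibling fork edges and the resulting incompatibility forces one of the pair to be extreme. Your write-up is somewhat more explicit than the paper's on the boundary cases and on ruling out all three fork colors being reused thrice (which the paper disposes of in the paragraph preceding the lemma), but the underlying argument is the same.
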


\begin{proof}
By Lemma \ref{fork3}, it is clear that if $c'$ is used at an edge in a fork in $H$ and reused thrice in $G \setminus H$ then, both of the colors $c'\pm 1$ have to be used at the rest two edges of the fork. Similarly, if $c'+1$ (or, $c'-1$) is used of at an edge of the fork and reused three times in $G \setminus H$ then $c'+2$ (or, $c'$) and $c'$ (or, $c'-2$)  have to be used at the other two edges of that fork, which is not possible as $c'-1$ (or, $c'+1$) and $c'$  have already been used at other two edges of that fork. So, the color $c'+2$ (or, $c'-2$) will remain unused in $H$, which makes $\lambda'_{1,2}(H) \geq 26$. So, if any of $c'\pm 1$ is reused thrice in $G \setminus H$ along with $c'$ then one of $c'\pm 2$ will be unused in $H$ and increase the span. However, observe that, if we can make sure that $c'+1=max$ or $c'-1=min$ then $c'+ 2$ or $c'-2$ does not belong to the span, where $max$ and $min$ be the maximum and the minimum colors in the span, respectively. This implies that when $c'+1=max$ (or, $c'-1=min$), both $c'$ and $c' + 1$ (or, $c'-1$) can be reused thrice each in $G \setminus H$. 
\end{proof}

\begin{theorem}
$\lambda'_{1,2}(T_8)\geq 26$.
\end{theorem}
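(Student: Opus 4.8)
The plan is a counting argument by contradiction, building directly on Observations~\ref{8reg14}--\ref{8regfork} and Lemmas~\ref{fork3}--\ref{fork2}. Suppose for contradiction that $G$ admits an $L(1,2)$-edge labeling using only the colors $\{0,1,\dots,25\}$. As already observed before the statement, if any color of $\{0,\dots,25\}$ were missing from $E(H)$ then $\lambda'_{1,2}(H)\ge 26$, contrary to the supposition; so we may assume each of $0,1,\dots,25$ occurs exactly once on $E(H)$, and in particular the minimum color $0$ and the maximum color $25$ each occur on a single edge of $H$.

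First I would bound from below the number of fork colors that must be reused three times. Recall that $|E(G\setminus H)|=60$, that $H$ has $14$ non-fork edges and $12$ fork edges, and that each of the $14$ non-fork colors can reappear at most twice in $G\setminus H$ (Observation~\ref{8reg14}) while each of the $12$ fork colors can reappear at most three times (Observation~\ref{8regfork}). If exactly $t$ of the $12$ fork colors are reused three times in $G\setminus H$, then the number of edges of $G\setminus H$ that can receive a color is at most $14\cdot 2 + (12-t)\cdot 2 + t\cdot 3 = 52+t$. Since all $60$ edges of $G\setminus H$ must be colored, this forces $t\ge 8$.

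Next I would bound $t$ from above using Lemma~\ref{fork2}. The four forks $\mathcal{F}_f,\mathcal{F}_g,\mathcal{F}_j,\mathcal{F}_k$ are pairwise edge-disjoint, so the $12$ fork edges split into four triples, one per fork. By Lemma~\ref{fork2}, within a single fork at most two colors can be reused three times, and two of its colors are reused three times only if one of those two is the global minimum $0$ or the global maximum $25$. Since $0$ and $25$ each lie on exactly one edge of $E(H)$, each of them lies on at most one fork edge and can therefore ``enable'' at most one fork to contribute two thrice-reused colors. Hence at most two of the four forks contribute two thrice-reused colors, while the remaining (at least two) forks contribute at most one each, so $t\le 2\cdot 2 + 2\cdot 1 = 6$.

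This contradicts $t\ge 8$, so no $L(1,2)$-edge labeling of $G$ with span at most $25$ exists; thus $\lambda'_{1,2}(G)\ge 26$, and since $G$ is a subgraph of $T_8$ and distances in $T_8$ are no larger than distances in $G$ (so every valid labeling of $T_8$ restricts to a valid labeling of $G$), we conclude $\lambda'_{1,2}(T_8)\ge\lambda'_{1,2}(G)\ge 26$. I expect the delicate step to be the upper bound $t\le 6$: one must be careful that the ``two thrice-reused colors'' of a fork are genuinely colors sitting on that fork's three edges, so that an enabling extreme color really does occupy one of its edges, and one must check that the degenerate case where $0$ or $25$ happens to be a non-fork color of $H$ only improves the bound. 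Everything else reduces to the elementary observation that $52+t$ cannot be simultaneously $\ge 60$ and $\le 58$.
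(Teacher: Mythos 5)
Your proposal is correct and follows essentially the same route as the paper: the paper's proof likewise combines the pre-theorem count (eight thrice-reused colors would be needed to cover the $60$ edges of $G\setminus H$) with Lemma~\ref{fork2} to conclude that at most six fork colors (two from a fork containing $\{min,min+1\}$, two from one containing $\{max,max-1\}$, and one from each remaining fork) can be reused thrice, giving at most $20\times 2+6\times 3=58<60$ colorable edges. Your explicit parameter $t$ with the bounds $t\ge 8$ and $t\le 6$ is just a cleaner packaging of the identical counting.
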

\begin{proof}
From Lemma \ref{fork2} it can be said that, if no color is unused in $H$ then only one color used at an edge in a fork in $H$, can be reused three times in $G \setminus H$ unless that color is $min+1$ or $max-1$. There are four forks in $H$. So, two colors $min$ and $min+1$ from one fork having colors $\{min, min+1,min+2\}$, two colors $max-1$ and $max$ from another fork having colors $\{max-2,max-1,max\}$, one color each from the remaining two forks, can be reused thrice each in  $G \setminus H$. 
Hence, total six colors used in $H$ can be reused thrice each in $G \setminus H$, rest $20$ colors can be reused at most  twice. Total number of edges of $G \setminus H$ that can be colored using the colors used in $H$ is $(20 \times 2 + 6\times 3)=58$. But $60$  edges are there in $G \setminus H$. Hence, at least one more color is required.  
\end{proof}

\section{Conclusions}

Here we improve lower and upper bounds for $L(2,1)$-edge labeling of $T_3,T_4,T_6$ and $T_8$ using structural properties. An interesting problem will be to improve or introduce new bounds on those graphs for other values of $h$ and $k$ or to examine different variants of $L(h,k)$-labeling.

%\clearpage

%\acknowledgements
%\nocite{*}
%\bibliographystyle{abbrvnat}
% use the following instead if you encounter problems 
%\bibliographystyle{alpha}
%\bibliography{sample-dmtcs}

%\label{sec:biblio}

\end{document}